\newtheorem{remark}{Remark}[section]
\newtheorem{theorem}{Theorem}[section]
\newtheorem{definition}{Definition}[section]
\newtheorem{corollary}[theorem]{Corollary}
\newtheorem{lemma}[theorem]{Lemma}
\newtheorem{proposition}[theorem]{Proposition}
\newtheorem{fact}{Fact}
\newcommand{\define}{\triangleq}
\newcommand{\pat}{\rightsquigarrow}
\newcommand{\fun}[1]{{\textsl{#1}}}
 \newcounter{ncomm}%
\newcommand{\W}{\cal W}
\renewcommand{\emptyset}{\varnothing}
\begin{document}

\sloppy

\title{A Polynomial-time Algorithm for Detecting the\\ Possibility of Braess Paradox in Directed Graphs}

\author{Pietro Cenciarelli \and Daniele Gorla \and Ivano Salvo\\
Sapienza University of Rome, Department of Computer Science\\
{\tt cencia@di.uniroma1.it, gorla@di.uniroma1.it, salvo@di.uniroma1.it}}

\maketitle

\begin{abstract}
A  directed multigraph is said {\em vulnerable} if it can generate 
{\em Braess paradox} in Traffic Networks. In this paper,
we give a graph-theoretic characterisation of vulnerable directed multigraphs;
analogous results appeared in the literature only for undirected multigraphs and for a specific family of directed multigraphs. 
The proof of our characterisation also provides an algorithm 
that checks if a multigraph is vulnerable in $O(|V| \cdot |E|^2)$;
this is the first polynomial time algorithm that checks vulnerability for general directed multigraphs.
The resulting algorithm also contributes to another well known problem, i.e. the directed subgraph
homeomorphism problem without node mapping, by providing another pattern graph for which a
polynomial time algorithm exists.
\end{abstract}

\section{Introduction}

{\em Traffic Networks} \cite{braessFormal,BI97} provide a model for studying selfish routing: 
non-cooperative agents travel from a source node $s$ to 
a destination node $t$. Since the cost (or latency) experienced by an agent 
while traveling along a path depends on network congestion
(and hence on routes chosen by other agents), 
traffic in a network stabilizes to the equilibrium of 
a non-cooperative game, where all agents experience the same latency. 
This phenomenon has been defined by Wardrop \cite{War52} in the
context of transport analysis.

For example, consider Fig.~\ref{fig:Wheat}(1), that depicts the so called
 {\em Wheatstone network}.
In the model, edges are labeled by a function (here, the constants 1 and 0, and the identity function $x$)
that specifies the latency of each edge in terms of the flow 
it experiences. So, for example,
the edge of latency 0 is ``ideal'', in the sense that, independently on how much traffic travels along it,
the passage from $b$ to $c$ is instantaneous. By contrast, the delay on passing from $a$ to $b$ and from
$c$ to $d$ is linear in the amount of flow traveling along such edges.
Thus, every small autonomous flow particle $\epsilon$ will try to use the edge $b \rightarrow c$,
because in this way it will experience a delay of $2\epsilon$ (instead of $1+\epsilon$, if it
had avoided that edge).
Consequently, a Wardrop flow of value 1 assigns all the flow to the path $a\,b\,c\,d$ in the picture,
and the overall latency is 2.   

In traffic networks, a well known and counterintuitive phenomenon is {\em Braess paradox} \cite{braessFormal,braessOriginal}, 
that originates when latency at Wardrop equilibrium decreases because of removing edges.
The Wheatstone network is a minimal example of Braess paradox:
Fig.~\ref{fig:Wheat}(2) shows its optimal subnet. There, a Wardrop flow of value 1 assigns 1/2
to both paths in the network ($a\,b\,d$ and $a\,c\,d$), thus obtaining a latency of 
3/2.
%

\begin{figure}[t]
\begin{tabular}{ccc}
\begin{minipage}{0.29\textwidth}
\center
\begin{tikzcd}[column sep=.8cm,row sep=.4cm]
   & b \ar[rd, "1"]\ar[dd,"0"]\\
a \ar[ur,"x"] \ar[dr,"1" '] && d\\
   & c \ar[ru,"x" ']\\
   &(1)
   \\
\end{tikzcd}
\end{minipage}
&
\begin{minipage}{0.29\textwidth}
\center
\begin{tikzcd}[column sep=.8cm,row sep=.4cm]
   & b \ar[rd,"1"]\\
a \ar[ur,"x"] \ar[dr,"1" '] && d\\
   & c \ar[ru,"x" ']\\
   &(2)
   \\
\end{tikzcd}
\end{minipage}
&
\begin{minipage}{0.33\textwidth}
\center
\begin{tikzcd}[column sep=.8cm,row sep=.4cm]
& b \ar[dd]\ar[dr] & \\
a \ar[ru]\ar[rd] & & d\\ 
& c \ar[ru]\\
&(3)
\\
\end{tikzcd}
\end{minipage}
\vspace{-3mm}
\end{tabular}
\caption{(1) The Wheatstone network; (2) Its optimal subgraph; (3) The graph $\W$.}
\label{fig:Wheat}
\vspace{-3mm}
\end{figure}

Braess paradox has been studied for decades. The results that are most strongly related to ours
start with \cite{Rough06}, where it is shown that, given a multigraph, a latency
function on its edges and the total amount of flow, it is NP-hard to prove
whether the resulting net suffers from the Braess paradox or not. 
An intriguing question raised in \cite{Rough06} (Open Question 1 in Sect.\,6.1) is to study
Braess paradox from a graph-theoretical perspective, i.e. by considering only a graph and studying
whether it admits instances that generate the paradox. This property of a graph has been called {\em vulnerability}.

A characterisation of vulnerable undirected multigraphs is presented in~\cite{Milch06}, where
it is proved that an undirected graph is vulnerable if and only if it is not series-parallel \cite{RS42}.
This characterisation only holds for graphs where every node and every edge lie on at least one 
simple $st$-path. 
Later, in \cite{ChenEtal15}, the same characterisation is proved for directed multigraphs
that satisfy the very same condition, called {\em irredundancy} therein. 
However, while checking redundancy and calculating the maximal irredundant 
subgraph can be efficiently done in undirected
graphs \cite{CinesiFull}, the same does not hold for directed graphs. Indeed,
in  \cite{CinesiFull} the authors hint at the difficulty of this problem and note that their solution is
tractable only for planar digraphs. They also conjecture that the problem of recognizing
vulnerable digraphs is untractable, in general.

\paragraph{Contribution}
In this paper, we disprove this conjecture. First of all, we prove NP-hardness 
of checking and removing redundancy in directed multigraphs, thus formalizing
the informal claim in \cite{CinesiFull}. Then, we provide a graph-theoretic characterisation
of vulnerable directed nets, by
proving that a directed multigraph is vulnerable if and only if it
contains 
the graph $\W$ underlying the Wheatstone network and depicted in Fig.~\ref{fig:Wheat}(3).   
Our constructive proof 
provides an algorithm to check vulnerability, with an execution time that is $O(|V| \cdot |E|^2)$. 
This is the first polynomial time algorithm we are aware of for checking vulnerability 
of general directed multigraphs and fixes the issue left open in \cite{CinesiFull}.

Finally, our characterisation and the resulting polynomial time
algorithm contribute to another well known problem: the directed subgraph
homeomorphism problem without node mapping \cite{For1980}.
This problem consists in fixing a pattern graph ($\W$ in our case) and try to find
an homeomorphic copy of it within an input graph. In \cite{For1980}, it is 
proved that the general problem is NP-hard unless the pattern
graph has all nodes with indegree at most 1 and outdegree at most 2, or indegree at most 2 and 
outdegree at most 1 (as for $\W$); pattern graphs of this kind, for which 
a polynomial time algorithm exists, are known in the literature \cite{HU72}. 
However, as far as we know, no polynomial time algorithm for all this class of pattern
graphs has been devised so far.
Thus, our work also contributes to this research line, by providing another pattern 
graph for which a polynomial time algorithm 
exists.

\vspace{-2mm}
\paragraph{Related work}
As we already said, the most strongly related papers are \cite{ChenEtal15,CinesiFull,Milch06}.

In \cite{Milch06}, vulnerable undirected multigraphs are characterised as the non-series-parallel
ones that, in turn, are those containing $\W$ as homeomorphic subgraph. 
Moreover, this work also characterises the undirected networks where all Wardrop equilibria are
weakly Pareto efficient; these turn out to be the nets with linearly independent routes (i.e., those 
nets where every $st$-path has at least one edge that belongs only to that path). Linear independence
of routes is also shown to be the characterisation of those undirected nets that are efficient 
under heterogeneous players (i.e., nets where different players can have different
latencies on the same edges). Furthermore, linear independence is characterised by
not having as homeomorphic subgraph any of three elementary nets (one of which is $\W$).

In \cite{ChenEtal15,CinesiFull}, directed networks are considered, both in their single commodity version
(i.e., with just one pair of source and target) and in the multicommodity one (i.e., with many such pairs).
Assuming irredundancy of the net (i.e., that every node and edge lies on at least one $st$-path, for some
$st$ pair), they prove that, for single commodity nets, vulnerability coincides with not being series-parallel
(like in \cite{Milch06}); the result is then properly generalised to multicommodity nets. 
As we show in this paper, checking
irredundancy of a net and calculating the maximal irredundant subnet are computationally difficult
problems. Indeed, in \cite{CinesiFull} an algorithm for checking vulnerability is provided only for undirected
and for planar directed nets.

An orthogonal bunch of works \cite{LRTW11,Rough06,RT02} has been devoted to the complexity 
of estimating the Braess ratio (i.e., the maximum ratio between the Wardrop latency of $G$ and of any its 
subnet) and the price of anarchy (i.e., the worst-case ratio between
the values of any Wardrop flow and of the optimal one). These works show that both measures 
have very strong inaproximability, both for the single and for the multicommodity 
directed scenarios.

Other works have been carried out to graph-theoretically characterise networks with similar kinds of
games \cite{EFM09,HN03,HM15,Milch05,Milch15}. These works differ from ours in the kind of efficiency one
aims at or in the model of the game. For example, in \cite{Milch05} the aim is to have all players
with the same latency in all equilibria, whereas in \cite{EFM09} the aim is to characterise nets
whose equilibria all minimize the maximum latency of every path, in a fremework where 
every player can only choose one path and send along it just one information unit.

Finally, an orthogonal paper is \cite{matroid15}, where Braess paradox
is generalized to all congestion games; structures that do not suffer of Braess paradox 
are then characterised in terms of matroids. Their elegant result, differently from ours,
it is not directly related to graph-theoretic concepts, nor it provides any algorithmic procedure.

\vspace{-2mm}
\paragraph{Organization of the paper}
We start in Section 2 by giving the basic notions on traffic networks and Braess paradox; this
leads to the definition of vulnerability.
Then, in Section 3 we show that the only existing characterisation of vulnerability for directed
multigraphs \cite{ChenEtal15,CinesiFull} cannot yield a polynomial algorithm for general multi-digraphs, 
unless P = NP.
In Section 4 we provide our characterisation. In Section 5 we show how this characterisation
yields a polynomial time algorithm and then show how this can be used for the directed subgraph
homeomorphism problem. In Section 6 we conclude the paper.

\section{Traffic Networks, Braess Paradox and Vulnerability}
\label{sec:vuln}

We start by providing the necessary background, by
essentially following the presentation in~\cite{Rough06}.
%
A \emph{directed multigraph} (or  \emph{multi-digraph}) $G = (V,E)$ consists of a
set $V$ of \emph{vertices} (or \emph{nodes}) and a set $E$ of \emph{edges}. 
Every edge relates a pair of vertices; if $e$ relates $(u,v)$, we say that $e$ is
an \emph{output} of $u$ and an \emph{input} of $v$.
We denote with $in(x)$ and $out(x)$ the edges entering into $x$ and exiting from $x$, respectively.
When the name of an edge is not relevant but only its extremes are,
we denote an edge that relates $(u,v)$ as  $u \rightarrow v$.
 
A \emph{path} is a sequence $u_1 e_1 u_2 \ldots u_{n-1} e_{n-1} u_n$ (for $n \geq 1$) of nodes and edges 
such that $e_i$ relates $(u_i,u_{i+1})$, for all $i<n$;
$u_1$ and $u_n$ are called {\em extremes} and $u_2,\ldots,u_{n-1}$ are called {\em internal nodes}.
When only the extremes of the path $p$ are relevant, we shall write
$u_1 \stackrel p \pat u_n$ (or simply $u_1 \pat u_n$ if we ignore the name given to the path).
If each $u_i$ appears just once in $p$, we say that $p$ is {\em simple} (or {\em acyclic}).
We fix a \emph{source} node, written $s$, and a \emph{target} node, written $t$.
The resulting triple $(G,s,t)$ will be called {\em net}; we shall sometimes
simply write it $G$, i.e., without specifying the source and target (that we leave understood). 
An \emph{st-path} is a 
path from $s$ to $t$. The set of $st$-paths in $G$ is denoted by $P(G)$; 
the set of 
simple $st$-paths is denoted by $SP(G)$. 
We say that a net is $st$-{\em connected} if all vertices of $G$ belong to 
an $st$-path.
%
%

A \emph{flow} for a net $(G,s,t)$ is a function 
$\varphi:SP(G) \rightarrow \mathbb{R}^+$.
The value of a flow is the sum of the values sent over all paths. 
A flow induces a unique flow $\varphi(e)$ on edges: for any edge $e\in E$, 
$\varphi(e)=\sum_{p\in SP(G): e\in p}\varphi(p)$. 
%
A {\em latency function} $l_e:\mathbb{R}^+\rightarrow \mathbb{R}^+$ 
assigns to each edge $e$ a latency that depends on the flow on it;
as usual, we only consider continuous and non-decreasing latency functions.
The latency of an $st$-path $p$ under a flow $\varphi$ is the sum of the latencies of 
all edges in the path under $\varphi$, i.e., $l_p(\varphi)=\sum_{e\in p} l_e(\varphi(e))$.
If $H$ is a subgraph of $G$, we denote with 
$l|_H$ the restriction of the latency function $l$ on the edges of $H$.

Given a net $G$, a real number $r\in\mathbb{R}^+$ and a latency function $l$, 
we call the triple $(G, r, l)$ an {\em instance}. A flow $\varphi$ is {\em feasible} for 
$(G, r, l)$ if the value of $\varphi$ is $r$. Notice that, since we do not have any
constraint on edges or vertices, every $r$ admits at least one feasible flow.

A feasible flow $\varphi$ for $(G, r, l)$ is at {\em Wardrop equilibrium} 
(or is a {\em Wardrop flow})\footnote{
	Here we use as definition the characterisation of Wardrop flow
	put forward by Proposition 2.2 in \cite{Rough06}.
} if, for all pairs $p, q$ of $st$-paths  
such that $\varphi(p)>0$, we have $l_p(\varphi)\leq l_q(\varphi)$.
In particular, this implies that, if $\varphi$ is a Wardrop flow, 
all $st$-paths to which $\varphi$ assigns a positive flow
have the same latency. It is known \cite{Rough06} that every
instance admits a Wardrop flow and that different Wardrop flows for the same
instance have the same latency along all $st$-paths with a positive flow. Thus,
we denote with $L(G, r, l)$ the latency of all $st$-paths 
with positive flow at Wardrop equilibrium. In the special case where $r = 0$,
we let $L(G, r, l)$ be 0.

{\em Braess paradox} \cite{braessFormal,braessOriginal} originates when latency 
at Wardrop equilibrium decreases because of removing edges (or equivalently, by raising 
the latency function on edges): an instance $(G, r, l)$ suffers from Braess paradox 
if there is a subgraph of $G$ with a lower latency.

In \cite{Rough06} it is shown that, given an instance $(G,r,l)$, it is NP-hard to prove
whether it suffers from the Braess paradox or not. An intriguing question raised in \cite{Rough06} is to study
the problem from a graph-theoretical perspective, i.e. by only having $G$ and studying
whether $G$ admits instances that generate the paradox.
This leads to the following definition \cite{Rough06}.
\begin{definition}
	A net $G$ is {\em vulnerable} if there exist a value $r$, a latency function
	$l$ and a subgraph $H$ of $G$ such that $L(G, r, l)>L(H, r, l|_H)$.
\end{definition}

\section{On the Complexity of Irredundancy in Directed Nets}
\label{sec:IRRED}

All characterisations of vulnerable nets we are aware of are for {\em irredundant} nets,
that are nets containing only irredundant vertices and edges.
As defined in \cite{ChenEtal15} (and derived from \cite{Milch06}), a vertex or an edge is {\em irredundant}
if it appears in a simple $st$-path, and {\em redundant} otherwise. 
Redundant vertices and edges can be safely ignored when studying
the Braess phenomenon in a multigraph: 
they will never be touched by a Wardrop flow, since only acyclic paths have a positive flow. 
The characterisation given by \cite{ChenEtal15} provides a polynomial time
algorithm for checking vulnerability for irredundant nets, 
where it suffices to check whether the net is series-parallel
(this can be done in linear time, by following \cite{VTL82}).
If a polynomial reduction from a redundant net to an irredundant subnet with the same simple paths existed, then a polynomial time algorithm would exist also for redundant nets.
As we now show, this is not possible unless P = NP. 
Furthermore, we will show that also verifying if a net is irredundant 
is an NP-complete problem.
Therefore, checking if a net satisfies the preconditions prescribed 
by the algorithm in \cite{CinesiFull} cannot 
be done in polynomial time, unless P = NP.
%

First, we observe that irredundancy can be checked by only focusing on edges.

\begin{fact}
If every edge of $(G,s,t)$ is irredundant, then every node is irredundant.
\end{fact}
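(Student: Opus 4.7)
The plan is direct. Given the hypothesis that every edge of $G$ appears in some simple $st$-path, I want to show the same for every node. The main observation is essentially trivial: if a node $x$ has any incident edge $e \in in(x) \cup out(x)$, then by hypothesis $e$ lies on some simple $st$-path $p$, and since $x$ is one of the two endpoints of $e$, $x$ appears on $p$ as well; hence $x$ is irredundant.

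Concretely, the proof reduces to fixing an arbitrary node $x \in V$, picking any incident edge of $x$, and invoking the edge-irredundancy hypothesis to produce a simple $st$-path through that edge, which must pass through $x$. Writing $p$ explicitly as a sequence of nodes and edges, the presence of $e$ in $p$ immediately gives the presence of $x$ in $p$, either as an extreme (when $x \in \{s,t\}$) or as an internal node.

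The only subtle case is an isolated node, i.e., one with no incident edge at all: such a node would be redundant regardless of the hypothesis. This case is either excluded by the implicit working assumption of the paper (isolated vertices are never touched by any Wardrop flow and are safely ignored in the traffic-network setting), or handled degenerately: if $s$ or $t$ itself is isolated, then no $st$-path exists, so the premise that every edge is irredundant forces $E = \emptyset$ and there are no non-isolated nodes to check.

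There is no real obstacle here; the entire content of the fact is the one-line observation that the endpoints of an irredundant edge are themselves irredundant. The value of the fact lies in its use downstream: when testing or repairing redundancy in a directed net, one may focus exclusively on edges, which is exactly what the subsequent NP-hardness reduction in Section~\ref{sec:IRRED} will exploit.
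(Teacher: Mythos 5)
Your proof is correct and rests on exactly the same one-line observation as the paper's: a simple $st$-path witnessing the irredundancy of an edge incident to $x$ also contains $x$ and hence witnesses its irredundancy; the paper merely phrases this in contrapositive form (assume $x$ redundant and derive that its incident edges on any $st$-path through $x$ would be redundant), while you argue directly. The isolated-node caveat you raise is genuine but is likewise left implicit by the paper, which tacitly works with nodes lying on some $st$-path, so your treatment is, if anything, slightly more careful.
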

\begin{proof}
Assume that $v$ is redundant: this means that every $st$-path containing $v$ is not simple.
Fix one of these paths $p$ and consider the edges of $p$ that are input and output of $v$:
they cannot be irredundant, otherwise also $v$ would be. Contradiction.
\end{proof}

We start with an easy characterisation of irredundant edges. By exploiting it, we then move to the problems of:
(1) checking whether an edge is irredundant; (2) deriving from any net its maximal irredundant subnet 
with the same 
acyclic paths; and (3) checking whether a net is irredundant. We will prove that all these three problems are NP-hard.

\begin{fact}
\label{lem:char-irr-edge}
An edge $e = u\rightarrow v$ is irredundant in the net $(G,s,t)$ if and only if 
there exist two node-disjoint paths $s \leadsto u$ and $v\leadsto t$.
\end{fact}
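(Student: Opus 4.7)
The plan is to prove both directions by explicit decomposition or concatenation of paths, relying on the definition of irredundancy as lying on some simple $st$-path.

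For the forward direction, suppose $e = u \rightarrow v$ is irredundant, so there is a simple $st$-path $p$ containing $e$. I would write $p = p_1 \cdot e \cdot p_2$ where $p_1$ is the prefix from $s$ to $u$ and $p_2$ is the suffix from $v$ to $t$. Since $p$ is simple, no vertex occurs twice in $p$; in particular, no vertex of $p_1$ appears in $p_2$ (and vice versa). Thus $p_1$ and $p_2$ are node-disjoint paths $s \leadsto u$ and $v \leadsto t$.

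For the backward direction, suppose there exist node-disjoint paths $q_1 \colon s \leadsto u$ and $q_2 \colon v \leadsto t$. First I would extract simple subpaths from each by shortcutting any repeated vertex; this preserves node-disjointness since we only remove vertices. Call the resulting simple paths $p_1$ and $p_2$. Then the concatenation $p_1 \cdot e \cdot p_2$ is an $st$-path containing $e$, and I claim it is simple: each of $p_1, p_2$ is individually simple; they share no vertex by node-disjointness; and the edge $e$ only contributes its endpoints $u$ and $v$, which appear exactly once (as the last vertex of $p_1$ and the first vertex of $p_2$, respectively). Hence $e$ lies on a simple $st$-path, so it is irredundant.

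The only mildly delicate case is $u = v$, in which $e$ is a self-loop: then $e$ cannot belong to any simple $st$-path, and simultaneously no paths $s \leadsto u$ and $u \leadsto t$ can be node-disjoint since they must share $u$; so the equivalence holds trivially. I do not expect a real obstacle here — the argument is essentially bookkeeping about how simplicity of a concatenated path reduces to simplicity of its pieces plus the absence of shared vertices between them.
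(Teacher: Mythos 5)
Your proof is correct and follows essentially the same route as the paper's: split the simple $st$-path at the edge $e$ for one direction, and concatenate $s\leadsto u$, $e$, $v\leadsto t$ for the other. Your extra bookkeeping (shortcutting the given paths to simple ones before concatenating, and dispatching the self-loop case $u=v$) is a slightly more careful treatment of details the paper dismisses as trivial, but it is not a different argument.
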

\begin{proof}
The `if' part is trivial: the $st$-path $s \leadsto u\, e\, v \leadsto t$ is acyclic and contains $e$.
For the `only if' part, by definition, there exists a simple $st$-path $v_0\, e_0\, v_1\, e_1 \ldots v_n$ (for $v_0 = s$, $v_n = t$ and $n \geq 1$) such that $e = e_{i}$, for some $i \in \{0,\ldots,n-1\}$.
Thus, $s \leadsto u$ is $v_0 \,e_0\, v_1 \ldots v_i$ and
$v \leadsto t$ is $v_{i+1}  \ldots v_n$: these paths 
are node-disjoint because the original
path was simple.
\end{proof}

\newcommand{\ewirr}{{\bf EW-$st$-IRR}}
\newcommand{\twodpp}{{\bf 2-DPP}}
\newcommand{\dered}{{\bf MIS}}

\begin{definition}
\label{def:irr-edge}
{\em Edge-wise $st$-irredundancy} (denoted as \ewirr) is the problem of deciding whether, given a multi-digraph
$G = (V,E)$, two distinct nodes $s,t \in V$ and an edge $e \in E$, the edge $e$ is irredundant in the net $(G,s,t)$.
\end{definition}

We first observe that \ewirr\ belongs to NP. By definition,
a certificate for the edge $e$ in $G$ to be irredundant 
is a simple path from $s$ to $t$ containing $e$.

\begin{proposition}
\label{prop:ewirr-NPhard}
\ewirr\ is NP-hard.
\end{proposition}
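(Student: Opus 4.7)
The plan is to reduce from the directed two disjoint paths problem \twodpp: given a digraph $G$ and two ordered pairs $(s_1,t_1), (s_2,t_2)$ of distinct vertices, decide whether $G$ contains two vertex-disjoint directed paths, one from $s_1$ to $t_1$ and one from $s_2$ to $t_2$. This problem is a classical NP-complete problem (Fortune, Hopcroft and Wyllie, \cite{For1980}), and Fact~\ref{lem:char-irr-edge} suggests an almost immediate reduction, since it rephrases irredundancy of an edge $u\to v$ precisely as a disjoint-path condition between $(s,u)$ and $(v,t)$.

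Given an instance $(G, s_1, t_1, s_2, t_2)$ of \twodpp, I would construct the instance $(G', s, t, e)$ of \ewirr\ as follows: take two fresh vertices $s$ and $t$ (not in $G$), and let $G'$ be obtained from $G$ by adding $s,t$ together with three new edges, namely $s\to s_1$, $t_2\to t$, and $e = t_1\to s_2$. The claim to establish is that $e$ is irredundant in the net $(G',s,t)$ if and only if the original \twodpp\ instance is a yes-instance. Since the reduction is clearly polynomial (adding $O(1)$ vertices and edges), proving this equivalence gives NP-hardness of \ewirr.

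For correctness, I would apply Fact~\ref{lem:char-irr-edge} to the edge $e = t_1\to s_2$: irredundancy of $e$ in $(G',s,t)$ is equivalent to the existence of node-disjoint paths $\pi_1: s\pat t_1$ and $\pi_2: s_2\pat t$ in $G'$, which (up to shortcutting cycles) may be taken to be simple. The forward direction is direct: given vertex-disjoint simple paths $P_1:s_1\pat t_1$ and $P_2:s_2\pat t_2$ in $G$, prepending $s\to s_1$ to $P_1$ and appending $t_2\to t$ to $P_2$ produces the required node-disjoint paths in $G'$, as $s$ and $t$ are fresh. For the converse, observe that $s$ has the single outgoing edge $s\to s_1$ and $t$ the single incoming edge $t_2\to t$, so $\pi_1$ must start with $s,s_1,\ldots$ and $\pi_2$ must end with $\ldots,t_2,t$. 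Moreover, the edge $e$ itself cannot appear in either $\pi_i$: if $e$ occurred in $\pi_1$ then $t_1$ would be an internal vertex of $\pi_1$, contradicting simplicity since $\pi_1$ ends at $t_1$; symmetrically for $\pi_2$ and $s_2$. Therefore the middle portions of $\pi_1$ and $\pi_2$ are paths in $G$ between the required endpoints, and they remain vertex-disjoint, solving the \twodpp\ instance.

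I do not expect a serious obstacle beyond carefully justifying that $e$ cannot be re-used by either of the two paths $\pi_1,\pi_2$; the argument above, based on simplicity and the fact that $e$'s endpoints are exactly the endpoints of the two paths, handles this cleanly. Membership in NP is already noted in the statement (a simple $st$-path through $e$ is a polynomial certificate), so the reduction above completes the NP-hardness, and in fact yields NP-completeness of \ewirr.
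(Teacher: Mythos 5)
Your reduction is correct and is essentially the paper's own argument: the paper also reduces from the directed 2-disjoint-paths problem via Fact~\ref{lem:char-irr-edge}, using an edge from the end of the first requested path to the start of the second (reusing it if present, adding it otherwise) and testing its irredundancy with the first path's source and the second path's target as $s$ and $t$. Your only deviations are cosmetic — adding fresh terminals $s,t$ with edges $s\to s_1$ and $t_2\to t$, and always inserting the linking edge (harmless in a multigraph) — so the two proofs coincide in substance.
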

\begin{proof}
Thanks to Fact \ref{lem:char-irr-edge},
we show that \ewirr\ can be easily used to solve \twodpp\ (the {\em 2-disjoint paths problem}) \cite{MP93}:
\begin{quote}
\twodpp: given a directed graph $G = (V,E)$ and four distinct vertices $x,y,w,z$, 
decide whether there exist $x \leadsto y$ and $w \leadsto z$ node-disjoint.
\end{quote}
The polynomial reduction is the following.
If $G$ contains an edge $y \rightarrow w$, it suffices to check whether such an edge is $xz$-irredundant.
Otherwise, we consider the graph $G' = (V, E \uplus \{e\})$, where $e$ is a new edge relating $(y,w)$,
and check whether $e$ is $xz$-irredundant in $G'$.
\end{proof}

NP-completeness of checking whether an edge is irredundant in a net will now be used
to show NP-hardness of the problem of extracting from a given net its maximal equivalent 
(with respect to vulnerability) irredundant subnet. Indeed, we shall prove that a net and 
any of its subnets are equivalent with respect to vulnerability if they have the same set of simple paths
(see Proposition \ref{lemma:equivalentGraphs} later on).

\begin{definition}
\label{def:dered}
The {\em maximal irredundant subnet} problem (denoted as \dered) is the problem of calculating, 
given a net $(G,s,t)$, its maximal irredundant subnet $(G',s,t)$ such that $SP(G)=SP(G')$. We denote
$G'$ by $\fun{MIS}(G,s,t)$
\end{definition}

\begin{lemma}
\label{lem:dered-unique}
$\fun{MIS}(G,s,t)$ is unique.
\end{lemma}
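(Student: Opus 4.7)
The plan is to argue that the union of two maximal witnesses is itself a witness, so maximality forces them to coincide.

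Suppose $G_1=(V_1,E_1)$ and $G_2=(V_2,E_2)$ are two subnets of $G$ that both satisfy the defining conditions of $\fun{MIS}(G,s,t)$: each is irredundant and has the same set of simple $st$-paths as $G$. Consider the subnet $G_{12}=(V_1\cup V_2,\,E_1\cup E_2)$. Clearly $G_{12}$ is still a subnet of $G$ and contains both $G_1$ and $G_2$.

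The first step is to check that $SP(G_{12})=SP(G)$. The inclusion $SP(G_{12})\subseteq SP(G)$ is immediate because $G_{12}$ is a subnet of $G$. Conversely, any $p\in SP(G)$ lies in $SP(G_1)\subseteq SP(G_{12})$ by hypothesis on $G_1$. The second step is to show that $G_{12}$ is irredundant. By Fact 2 it suffices to verify that every edge of $G_{12}$ lies on some simple $st$-path of $G_{12}$. Pick any edge $e\in E_1\cup E_2$, say $e\in E_1$. Since $G_1$ is irredundant, $e$ belongs to some simple $st$-path $p$ of $G_1$, and $p$ is also a simple $st$-path of $G_{12}$, so $e$ is irredundant in $G_{12}$. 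The symmetric case is identical.

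Hence $G_{12}$ is an irredundant subnet of $G$ with $SP(G_{12})=SP(G)$ that contains both $G_1$ and $G_2$. By maximality of $G_1$ we get $G_1=G_{12}$, and by maximality of $G_2$ we get $G_2=G_{12}$; thus $G_1=G_2$.

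The only delicate point is conceptual rather than technical: one must make sure that ``subnet'' is closed under finite union (which it is, since vertex and edge sets are just combined and source/target are preserved) and that ``maximal'' is interpreted as maximal with respect to subnet inclusion. Once this is clear, the union-of-witnesses argument is routine and no further obstacle arises.
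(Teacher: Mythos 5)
Your argument is correct, but it takes a genuinely different route from the paper. The paper's proof is constructive: it names the candidate explicitly (delete from $G$ exactly the redundant vertices and edges), checks that this deletion preserves the set of simple $st$-paths and yields an irredundant subnet, and gets uniqueness by noting that removing anything more destroys some simple $st$-path while removing anything less leaves a redundant element. That buys existence as well as the explicit description of $\fun{MIS}(G,s,t)$ as ``$G$ minus its redundant part'', which is precisely what the subsequent NP-hardness reduction for \dered\ exploits (an edge is irredundant iff it appears in $\fun{MIS}(G,s,t)$). Your lattice-style argument instead shows the family of witnesses is closed under union, so two maximal witnesses must coincide; this is sound, and all the verifications (union preserves $SP(G)$, union of irredundant subnets is irredundant) go through. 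It is worth noticing that your argument can be pushed further and actually makes maximality irrelevant: since any witness $G'$ has $SP(G')=SP(G)$, every irredundant edge and vertex of $G$ lies on some path of $SP(G')$ and hence belongs to $G'$, while irredundancy of $G'$ excludes every redundant element; so \emph{every} witness already equals $G$ minus its redundant part, which is essentially the paper's characterisation recovered. Two minor points: the fact you want for ``all edges irredundant implies all nodes irredundant'' is the first Fact of the paper, not the second (and in the union you can bypass it, since each vertex of $G_1$ or $G_2$ already lies on a simple $st$-path of that subnet, hence of the union); and, unlike the paper's proof, yours does not by itself establish that a witness exists, though for the uniqueness claim as stated this is not needed and existence follows from the paper's construction.
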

\begin{proof}
Redundant vertices and edges are univocally determined, once fixed $G$, $s$ and $t$. 
Moreover, removing redundant edges does not change the set of simple paths in $G$. 
So, it suffices to delete them from $G$ in order to obtain a subnet $(G',s,t)$ that is irredundant
but has the same simple $st$-paths as $G$. Indeed, in passing from $G$ to $G'$, we have 
only removed edges and vertices that either do not belong to any $st$-path (viz., those vertices that
are unreachable from $s$ or that cannot reach $t$, and edges incident to them) or belong only to
cyclic $st$-paths. To show uniqueness, notice that: if we remove other edges/vertices, the resulting net will
loose some acyclic $st$-path; if we remove not all these edges/vertices, the resulting net will still be
redundant.
\end{proof}

\begin{proposition}
\label{lem:dered-np}
\dered\ is NP-hard.
\end{proposition}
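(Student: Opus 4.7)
The plan is to reduce \ewirr\ to \dered\ in a completely direct way, leveraging the characterisation of $\fun{MIS}(G,s,t)$ that comes out of the proof of Lemma~\ref{lem:dered-unique}. That lemma shows not only that the maximal irredundant subnet is unique, but also that it is obtained from $G$ by deleting exactly those edges and vertices that are redundant. In particular, for any edge $e\in E(G)$, the equivalence
\[
e \text{ is irredundant in } (G,s,t) \iff e \in E(\fun{MIS}(G,s,t))
\]
holds essentially by definition.

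Given an instance $(G,s,t,e)$ of \ewirr, the reduction is therefore: compute $G' = \fun{MIS}(G,s,t)$ using a hypothetical polynomial-time oracle for \dered, and answer ``yes'' if and only if $e$ appears in the edge set of $G'$. The construction of the \dered\ instance from the \ewirr\ input is trivial (we just forget the distinguished edge $e$ until the final membership test), so it is clearly polynomial; and the final test $e \in E(G')$ is a single lookup. By the equivalence above, this correctly decides \ewirr. Since Proposition~\ref{prop:ewirr-NPhard} establishes that \ewirr\ is NP-hard, the existence of a polynomial-time algorithm for \dered\ would imply $\text{P}=\text{NP}$, yielding the desired NP-hardness of \dered.

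I do not expect any real obstacle: the only delicate point is that \dered\ is a function problem rather than a decision problem, so NP-hardness has to be understood in the sense of polynomial-time Turing reductions, and the argument above is exactly such a reduction. The rest (polynomiality of the reduction, correctness of the equivalence) is immediate from Lemma~\ref{lem:dered-unique} and Fact~\ref{lem:char-irr-edge}, and requires no further combinatorial work.
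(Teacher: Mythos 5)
Your proposal is correct and matches the paper's own argument: the paper likewise reduces \ewirr\ to \dered\ by noting (via Lemma~\ref{lem:dered-unique}) that an edge is $st$-irredundant exactly when it appears in $\fun{MIS}(G,s,t)$, so a polynomial algorithm for \dered\ would decide \ewirr. Your added remark about this being a polynomial-time Turing reduction for a function problem is a fair clarification but does not change the substance.
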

\begin{proof}
Thanks to Lemma \ref{lem:dered-unique},
\dered\ can be easily used to solve \ewirr:
an edge is $st$-irredundant for $(G,s,t)$ if and only if it appears in $\fun{MIS}(G,s,t)$.
\end{proof}

Finally, also the 
problem of checking whether a given net is 
irredundant is an NP-complete problem. In principle, this problem could be computationally simpler:
there could exist some characterisation that provides a polynomial test to check if a net is redundant or not
but without giving any hint on which edges are redundant and which ones are not.

\newcommand{\gred}{{\bf $st$-IRR}}

\begin{definition}
The {\em $st$-irredundancy} problem (denoted as \gred) is the problem of deciding whether, given a multi-digraph
$G = (V,E)$ and two distinct nodes $s,t \in V$,  the net $(G,s,t)$ is irredundant.
\end{definition}

We first observe that \gred\ belongs to NP. A certificate for $G$ to be irredundant are $|E|$ simple $st$-paths 
such that the $i$-th path contains the $i$-th edge of $E$. 
We now reduce \ewirr\ to \gred: given a net $G$ and an edge $u\rightarrow v$, 
we build a new net $(G^*,s^*,t^*)$ that is redundant if and only if $u \rightarrow v$ is redundant in $(G,s,t)$. 
\medskip

\begin{figure}[t]
\begin{center}
\begin{tikzcd}[column sep=.4cm,row sep=.3cm]
&&&&&& \tikzmarkin{a1} &  \textcolor{gray!110}{G'} & u' 
\ar[rrrrddddddd, to path={..controls +(7,-.1) and +(1,0)..(\tikztotarget)}]
\\
&&&&&& s' &&&& t' \ar[rrdd, bend left=30]
\\
&&&&&&& \ar[dlll,Rightarrow] & v' &\ & \tikzmarkend{a1}  &
\\
&&&& r' \ar[uurr,bend left=20]  \ar[rrrrddddddd, bend left=20]  &&&&&&&& a'' \ar[dddlllllllll,bend right=10] \ar[ulll,Rightarrow]
\\
&&& z' \ar[ur] \ar[llld, bend right=20] &&&&&&&&&\ 
\\
t^* && s^*  \ar[ru]\ar[dr]
\\
&&& z''  \ar[dr] \ar[lllu, bend left=20]
\\
&&&& r'' \ar[ddrr, bend right=20] \ar[rrrruuuuu,bend right=10]  &&&&&&&& a' \ar[uuulllllllll,bend left=10]
\ar[dlll,Rightarrow]
\\
&&&&&& \tikzmarkin{a2} & \ar[ulll,Rightarrow] & u'' 
\ar[rrrruuuuu,to path={..controls +(5,-5) and +(2,0)..(\tikztotarget)}] &\ 
\\
&&&&&& \ s'' &&&& t''  \ar[rruu,bend right=10]
\\
&&&&&&& \textcolor{gray!110}{G''} & v'' &&\tikzmarkend{a2} 
\\
\end{tikzcd}
\vspace*{-2.75cm}
\caption{The construction for reducing \ewirr\ to \gred\ (a thick arrow from a node to a grey part representing a
graph means $|V|$ edges from that node to {\em every} node of the graph; and vice versa).}
\label{fig:NPreducrtion}
\end{center}
\vspace{-4mm}
\end{figure}

Intuitively, our reduction is based on the following ideas: 
(1) if $u\rightarrow v$ is irredundant, then there exists a simple path $p$ in $G$ of the form 
$s\leadsto u\rightarrow v\leadsto t$, and  
(2) for each vertex $x$, adding new edges of the form $u\rightarrow x$ and $x\rightarrow v$ 
does not make irredundant the edge $u\rightarrow v$ in $G^*$ if it is redundant in $G$. 
By using such additional edges, if $u\rightarrow v$ is irredundant in $G$, 
we can make irredudant any edge $x\rightarrow y$ of $G$  
thanks to the path $s\leadsto u\rightarrow x \rightarrow y \rightarrow v\leadsto t$ that is simple provided that 
$x$ and $y$ do not occur in $p$. 

To correctly handle the case in which $x$ or $y$ occur in $p$, we build a net $(G^*, s^*, t^*)$ 
that consists of  
two copies $G'$ and $G''$ of $G$ obtained by decorating each node in $G$ with $'$ and $''$, respectively. 
Each edge $x'\rightarrow y'$ in $G'$ (resp. $x''\rightarrow y''$ in $G''$) 
will be irredundant in $G^*$ whenever $u\rightarrow v$ is irredundant in $G$
thanks to a path of the form 
$s^*\leadsto s''\leadsto u''\leadsto x'\rightarrow y'\leadsto v''\leadsto t'' \leadsto t^*$
(resp. $s^* \leadsto s'\leadsto u'\leadsto x'' \rightarrow y'' \leadsto v'\leadsto t'\leadsto t^*$)
obtained by adding in $G^*$ paths $u'\leadsto x''$, $u''\leadsto x'$, $x'\leadsto v''$, and $x''\leadsto v'$, 
for all nodes $x'\in V'$, $x''\in V''$.


To ensure that redundancy of $u\rightarrow v$ in $G$ implies redundancy of $u'\rightarrow v'$ and $u''\rightarrow v''$ 
in $G^*$, we will define $G^*$ in such a way that a simple path entering in $G'$ via $s'$ (resp. in $G''$ via $s''$) 
can reach the target $t^*$ of $G^*$ only  
through the node $t'$ (resp. $t''$), possibly after {\em only one} detour in $G''$ (resp. $G'$).
This is guaranteed by additional nodes $z',z'',r',r'',a',a''$ and their incident edges (see Fig. \ref{fig:NPreducrtion}). 
A simple path can enter in $G'$ (resp. $G''$) only through nodes $z'$ and $r'$ (resp. $z''$ and $r''$): this implies that this path can reach $t^*$ 
only through the path $t'\rightarrow a''\rightarrow z''\rightarrow t^*$ (resp. $t''\rightarrow a'\rightarrow z'\rightarrow t^*$). As a consequence, this path can enter 
$G''$ (resp. $G'$) only through $u'\rightarrow a'\rightarrow x''$ (resp. $u''\rightarrow a''\rightarrow x'$) for some $x''\in V''$ (resp. $x'\in V'$) and it must eventually come back into $G'$ (resp. $G''$) through the path 
$y''\rightarrow r''\rightarrow v'$ (resp. $y'\rightarrow r'\rightarrow v''$) for some $y''\in V''$ (resp. $y'\in V'$).  
 

\begin{theorem}
\label{lem:dered-np}
\gred\ is NP-hard.
\end{theorem}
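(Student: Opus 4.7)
The plan is to reduce from \ewirr\ to \gred\ using the construction $(G^*,s^*,t^*)$ of Fig.~\ref{fig:NPreducrtion}; membership of \gred\ in NP has already been argued above. The reduction is clearly polynomial: to two disjoint copies $G'$ and $G''$ of $G$, one attaches only the eight gadget nodes $s^*,t^*,z',z'',r',r'',a',a''$ together with a constant number of thick-arrow bundles, each contributing $O(|V|)$ edges. The crux is then to prove the equivalence: $u\rightarrow v$ is irredundant in $(G,s,t)$ if and only if $(G^*,s^*,t^*)$ is irredundant.

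For the ``if'' direction, assuming a simple path $\pi = s\leadsto u\rightarrow v\leadsto t$ in $G$, I would show that every edge of $G^*$ lies on some acyclic $s^*t^*$-path. Edges among the gadget nodes are handled by direct inspection of Fig.~\ref{fig:NPreducrtion}. For an edge $x'\rightarrow y'$ inside $G'$, I would use the detour
\[
s^*\leadsto s''\leadsto u''\rightarrow a''\rightarrow x'\rightarrow y'\rightarrow r'\rightarrow v''\leadsto t''\leadsto t^*
\]
(and symmetrically for edges of $G''$), whose $G''$-segments $s''\leadsto u''$ and $v''\leadsto t''$ are inherited from $\pi$; the path is acyclic because it visits $G'$ only through $x'$ and $y'$ and the two copies are node-disjoint.

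The ``only if'' direction is where I expect the main obstacle, and it is exactly what the elaborate shape of the gadget is designed to enforce. Assuming $(G^*,s^*,t^*)$ is irredundant, the edge $u'\rightarrow v'$ in particular lies on some simple $s^*t^*$-path $\sigma$. The key structural lemma to establish is that $\sigma$ is forced by the gadget topology to have the schematic form
\[
s^*\leadsto s''\leadsto u''\leadsto x'\leadsto u'\rightarrow v'\leadsto y'\leadsto v''\leadsto t''\leadsto t^*,
\]
i.e.\ it enters $G''$ at $s''$, crosses exactly once into $G'$ via the thick-arrow bridge from $u''$, runs a simple sub-path through $u'\rightarrow v'$ inside $G'$, returns to $G''$ via the bridge into $v''$, and exits through $t''$. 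Proving this requires a case analysis on the limited ingress/egress points of each copy and on the fact that $t^*$ is reachable only via $z',z''$, whose incoming edges together with simplicity of $\sigma$ forbid more than one crossing between $G'$ and $G''$; ruling out all ``illegal'' shapes of $\sigma$ is the technical core of the argument. Once this form is in place, the $G''$-sub-paths $s''\leadsto u''$ and $v''\leadsto t''$ are node-disjoint by simplicity of $\sigma$, so their projections to $G$ yield node-disjoint simple paths $s\leadsto u$ and $v\leadsto t$; concatenating them via $u\rightarrow v$ gives the required simple $st$-path in $G$ through $u\rightarrow v$, witnessing its irredundancy.
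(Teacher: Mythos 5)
Your proposal follows essentially the paper's route: the same reduction from \ewirr\ via the two-copy gadget $(G^*,s^*,t^*)$, the same witness paths $s^*\rightarrow z''\rightarrow r''\rightarrow s''\leadsto u''\rightarrow a''\rightarrow x'\rightarrow y'\rightarrow r'\rightarrow v''\leadsto t''\rightarrow a'\rightarrow z'\rightarrow t^*$ for the easy direction, and for the hard direction the contrapositive of what the paper proves directly (the paper assumes $u\rightarrow v$ redundant in $G$ and does a case analysis on how a hypothetical simple $s^*t^*$-path through $u'\rightarrow v'$ first enters $G'$, via $r'$ or via $r''$, deriving a contradiction in each case).

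One concrete inaccuracy in your plan for the hard direction: the ``key structural lemma'' as you state it is false. A simple $s^*t^*$-path $\sigma$ through $u'\rightarrow v'$ is \emph{not} forced to cross into $G''$; it may stay entirely inside $G'$, entering via $s^*\rightarrow z'\rightarrow r'\rightarrow s'$ and leaving via $t'\rightarrow a''\rightarrow z''\rightarrow t^*$ (indeed this is exactly the shape that exists whenever $u\rightarrow v$ is irredundant in $G$, so no case analysis can exclude it). The lemma must therefore be weakened to a dichotomy: either $\sigma$ stays within $G'$, in which case its $G'$-segment projects directly to a simple $st$-path of $G$ through $u\rightarrow v$, or it has the crossing form you describe, in which case the two $G''$-segments $s''\leadsto u''$ and $v''\leadsto t''$ are node-disjoint and yield the witness via the edge $u\rightarrow v$. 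Both branches give irredundancy of $u\rightarrow v$ in $G$, so your argument survives with this correction; the remaining work (ruling out all other shapes using the limited ingress/egress points $r',r'',a',a''$ and the fact that $t^*$ is reachable only via $z'$ or $z''$) is exactly the case analysis the paper carries out, which you correctly identified as the technical core but did not actually perform.
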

\begin{proof}
Given an instance for \ewirr\ (i.e., a multi-digraph $G = (V,E)$, two distinct vertices $s,t \in V$ and $u \rightarrow v \in E$),
we first observe that edges entering into $s$ or exiting from $t$ are trivially redundant; so, we can assume that
$v \neq s$ and $u \neq t$. Now,
we build a new net $G^*$ such that $(G^*,s^*,t^*)$ is irredundant if and only if $u \rightarrow v$ is irredundant in $(G,s,t)$.
The construction is as follows:
\begin{itemize}
\item Create two isomorphic node-disjoint copies of $G$, call them $G' = (V',E')$ and $G'' = (V'',E'')$, 
with $V' = \{x' : x \in V\}$, $V'' = \{x'' : x \in V\}$, 
$E' = \{e' : e \in E\}$ and 
$E'' = \{e'' : e \in E\}$, where $e'$ and $e''$ relate $(x',y')$ and $(x'',y'')$ respectively, if $e$ relates $(x,y)$.
\item Define $G^*$ as $(V^*,E^*)$, where  
$$
\begin{array}{ll}
V^*\ = & V' \uplus V'' \uplus \{s^*, z', z'',a', a'', r', r'',t^*\}
\vspace*{.3cm}
\\
E^*\ = 
& (E'\ \setminus (in(s') \cup out(t'))) 
\ \cup\ (E''\ \setminus (in(s'') \cup out(t'')))
\vspace*{.2cm}\\
& \cup\ \{s^* \rightarrow z', z' \rightarrow r', r' \rightarrow s',
t' \rightarrow a'', a'' \rightarrow z'', z'' \rightarrow t^*,
\\
& \hspace*{.55cm} s^* \rightarrow z'', z'' \rightarrow r'', r'' \rightarrow s'',
t'' \rightarrow a', a' \rightarrow z', z' \rightarrow t^* \}
\vspace*{.2cm}\\
& \cup\ \{u' \rightarrow a', r'' \rightarrow v'\}
\ \cup\ \bigcup_{x'' \in V''} \{a' \rightarrow x'',x'' \rightarrow r''\}
\vspace*{.2cm}\\
& \cup\ \{u'' \rightarrow a'', r' \rightarrow v''\}
\ \cup\ \bigcup_{x' \in V'} \{a'' \rightarrow x',x' \rightarrow r'\}
\end{array}
$$
\end{itemize}
The graphical representation of this construction is given in Fig. \ref{fig:NPreducrtion}.

\paragraph{Irredundancy:}

Let us first assume that $u \rightarrow v$ is irredundant in $(G,s,t)$ and show that every edge in $G^*$ is
irredundant. By definition, irredundancy of $u \rightarrow v$ implies the
existence of a simple $st$-path in $G$ that contains it; let $s \stackrel{p_1}\leadsto u \rightarrow v 
\stackrel{p_2}\leadsto t$ be such a path. 
First, note that no edge belonging to $in(s) \cup out(t)$ can appear in this path, since all these
edges are redundant; for this reason, we have excluded them in $G^*$.

To show that a generic edge $x' \rightarrow y' \in E'\ \setminus (in(s') \cup out(t'))$ is irredundant,
let us consider the path
$s^* \rightarrow z'' \rightarrow r'' \rightarrow s'' \stackrel{p_1''}\leadsto u'' \rightarrow a'' \rightarrow 
x' \rightarrow y' \rightarrow r' \rightarrow v'' \stackrel{p_2''}\leadsto t'' \rightarrow a' \rightarrow z' \rightarrow t^*$,
where $p_1''$ and $p_2''$ denote $p_1$ and $p_2$ within $G''$.
By considering such paths for every $x' \rightarrow y' \in E'$, all edges in
$\{s^* \rightarrow z'', z'' \rightarrow r'', r'' \rightarrow s'' , u'' \rightarrow a'' , a'' \rightarrow x', 
y' \rightarrow r', r' \rightarrow v'' , t'' \rightarrow a', a' \rightarrow z', z' \rightarrow t^*\}$ are proved
irredundant, for every $x', y' \in V'$.

The remaining edges of $E^*$ are ``dual'' (i.e. have $'$ and $''$ swapped) of the ones considered so far
and can be proved irredundant in $G^*$ by considering the dual version 
(i.e. with $'$ and $''$ swapped everywhere)
of the path above.

\paragraph{Redundancy:}

Let us now assume that $u \rightarrow v$ is redundant in $(G,s,t)$ but, by contradiction, 
assume that there is a simple $s^*t^*$-path $p$ in $G^*$ that contains $u' \rightarrow v'$. 
Let us now yield a contradiction by reasoning 
on how $p$ has entered into $G'$ for the first time:
\begin{enumerate}

\item {\em Through $r'$:} We first observe that $r'$ can only be reached
through $z'$: the only edges different from $z' \rightarrow r'$ entering into $r'$ are those of
the form $x' \rightarrow r'$, for $x' \in V'$; but using these edges would mean that $p$ has already
entered into $G'$. Second, we observe that the only way for reaching $t^*$ with a simple path is through the path
$t' \rightarrow a'' \rightarrow z'' \rightarrow t^*$, because $z'$ has already been touched by $p$.
Finally, since $G'$ and $G$ have the same simple paths, 
$p$ cannot lie only within $G'$, otherwise it could be used to show irredundancy of $u \rightarrow v$ in $G$. 
So, $p$ must eventually enter into $G''$ (and then come back into $G'$).

We have two possibilities on how $p$ has first entered into $G'$ through $r'$:
\begin{enumerate}
\item {\em through the edge $r' \rightarrow s'$:} in this case, $p$ cannot reach $G''$ 
through a path of the form $x' \rightarrow r' \rightarrow v''$ otherwise it would be cyclic.
Thus, $p$ has reached $G''$ through a path of the form $u' \rightarrow a' \rightarrow x''$, for some $x'' \in V''$. 
However, $p$ cannot pass through
$u' \rightarrow v'$ before leaving $G'$, otherwise it would be cyclic (since it would touch $u'$ twice).
For the same reason, $p$ cannot pass through $u' \rightarrow v'$ when it comes back into $G'$.
Contradiction.

\item {\em through the edge $r' \rightarrow v''$ and by then entering into $G'$ from $G''$:}
in this case, we have two possible ways for reaching $G'$, i.e. through a path of the form
$u'' \rightarrow a'' \rightarrow x'$ or through a path of the form $x'' \rightarrow r'' \rightarrow v'$.
The first way would force $p$ to touch $a''$ twice (the second one when passing from $t'$ to $t^*$).
But also the second way would make $p$ cyclic because, to include $u' \rightarrow v'$, it should
touch $v'$ twice. Contradiction.
\end{enumerate}

\item {\em Through $r''$:} In this case, $G'$ cannot be reached through $r'' \rightarrow v'$ otherwise,
to include $u' \rightarrow v'$, $p$ should touch $v'$ twice. Thus, it must be that, from $r''$, $p$ follows 
a path of the form $r'' \rightarrow s'' \stackrel{p'}\leadsto u'' \rightarrow a'' \rightarrow x'$, for some $x' \in V'$.
However, having touched $a''$, the only possible way for $p$ to acyclically reach $t^*$ is by coming back
into $G''$, reaching $t''$ and then following the path $t'' \rightarrow a' \rightarrow z' \rightarrow t^*$.
Thus, $p$ cannot come back into $G''$ through a path $u' \rightarrow a' \rightarrow x''$; hence,
the only possibility is via a path of the form $y' \rightarrow r' \rightarrow v''$, for some $y' \in V'$.
Moreover, to be acyclic, $p$ has to move from $v''$ to $t''$ via a path $p''$ that only touches
vertices that $p$ has not touched before.  However, in this way we would have found an $s''t''$- path
$s'' \stackrel{p'}\leadsto u'' \rightarrow v'' \stackrel{p''}\leadsto t''$ that is acyclic and contains the edge
$u'' \rightarrow v''$. This yields a contradiction because, since $G''$ and $G$ have the same simple paths, 
such a path could be used to show irredundancy of $u \rightarrow v$ in $G$.
\end{enumerate}
\end{proof}

\section{Characterisation of Vulnerable Multi-Digraphs}

A characterisation of vulnerable {\em undirected} multigraphs is presented in~\cite{Milch06}. 
In particular, in~\cite{Milch06} it is proved that an undirected multigraph is vulnerable if and only if it 
is not series-parallel; moreover, in~\cite{Milch06} it is also proved that 
this holds if and only if the multigraph does not contain a homeomorphic copy of 
(the undirected version of) $\W$.
In \cite{ChenEtal15},
it is proved that an {\em irredundant} multi-digraph is vulnerable 
if and only if it is not two-terminal series-parallel (TTSP) \cite{RS42}. 

Our main result (Theorem~\ref{thm:vuln} in Section \ref{sec:char}) states that {\em any} $st$-connected 
multi-digraph is vulnerable if and only it contains a homeomorphic copy of $\W$ 
(in the sense made precise by Definition~\ref{def:stEmbedding}).
Therefore, our characterisation extends the result 
in~\cite{Milch06} to multi-digraphs and generalises the result in 
\cite{ChenEtal15} to possibly redundant graphs,
thus answering to Open Question 1 in Sect.\,6.1 of~\cite{Rough06}.
We obtain this result in four steps.

First, in Section~\ref{sec:acyclic}, 
we characterise vulnerability for acyclic multi-digraphs (Fact~\ref{thm:vuln-ser-par}); 
this is an immediate corollary of a result in \cite{ChenEtal15} and standard results 
about two-terminal series-parallel nets. 
Second, in Section~\ref{sec:fromCyclicToAcyclic}, 
we show that two nets with the same set 
of simple $st$-paths have exactly the same 
Wardrop equilibria and, consequently, removing redundant edges 
does not affect vulnerability.
%
Third, in Section \ref{sec:remCycles}, we show that, 
by carefully analysing a simple cycle in a multi-digraph, 
we can always either find  $\W$ as an homeomorphic subgraph 
(again according to Definition~\ref{def:stEmbedding}), 
or find at least a redundant edge.  
Finally, in Section \ref{sec:char}, we apply such cycle analysis 
to characterise vulnerability of general (i.e., possibly cyclic) graphs $G$, 
by either finding a homeomorphic copy of $\W$ in $G$ 
or by finding an 
acyclic graph with the same simple paths as $G$.

\subsection{Characterising Vulnerability for Acyclic Multi-digraphs}
\label{sec:acyclic}

We start by giving the formal definitions of subgraph homeomorphism and $st$-embedding.
In what follows, we always assume that $G$ is $st$-connected. 
This is not restrictive, as we can always find the maximal $st$-connected subgraph of $G$ (through
a visit from $s$ and a backwards visit from $t$), without affecting its vulnerability.
Indeed, as we show in Section \ref{sec:fromCyclicToAcyclic}, edges and vertices that do not 
lie on a simple $st$-path will never play any role in flows and, consequently, they do not influence vulnerability.


\begin{definition}[\cite{LR80}] 
A {\em subgraph homeomorphism} between $H$ and $G$
is a pair of injective mappings $(\phi,\psi)$ such that $\phi : V_H \rightarrow V_G$, 
$\psi : E_H \rightarrow SP(G)$, 
and for every edge $e=x\rightarrow y\in  E_H$, $\psi(e)$ is 
a path in $G$ from $\phi(x)$ to $\phi(y)$.

A homeomorphism is called {\em node-disjoint} if all paths in ${\mathit Image}(\psi)$ are pairwise 
node-disjoint, up-to their end points.
\end{definition}

\begin{definition}
\label{def:stEmbedding}
An {\em st-embedding} of the Wheatstone graph $\W$ (see Fig.~\ref{fig:Wheat}(3)) into a net $(G,s,t)$
is a node-disjoint subgraph homeomorphism $(\phi,\psi)$ between $\W$ and $G$ such that
there are (possibly empty) node-disjoint simple paths from $s$ to $\phi(a)$ and from $\phi(d)$ to $t$ that
are pairwise node-disjoint up-to their endpoints with all paths in ${\mathit Image}(\psi)$.
\end{definition}

By using results in \cite{ChenEtal15} and 
by applying the standard result from \cite{Duf65}, where it is proved that
an acyclic directed graph is TTSP if and only if it does not contain 
a subgraph homeomorphic to $\W$, we can characterise vulnerability for acyclic multi-digraphs,
since acyclicity is a special case of irredundancy.

\begin{fact}
\label{thm:vuln-ser-par}
Let $(G,s,t)$ be an acyclic $st$-connected net. $G$ is vulnerable if and only if
there exists an $st$-embedding of $\W$ into $G$.
\end{fact}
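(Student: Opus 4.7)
The plan is to chain two existing theorems. First, since every $st$-path in a DAG is automatically simple, $st$-connectedness of $G$ is equivalent to irredundancy in the sense of \cite{ChenEtal15}. Their main characterisation then yields that $G$ is vulnerable iff $G$ is not two-terminal series-parallel (TTSP). Second, Duffin's theorem \cite{Duf65} states that an acyclic two-terminal digraph is TTSP iff it contains no subgraph homeomorphic to $\W$. Composing these two biconditionals gives: $G$ is vulnerable iff $G$ admits a node-disjoint subgraph homeomorphism of $\W$.

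The $(\Leftarrow)$ direction of the Fact is then immediate, since any $st$-embedding of $\W$ restricts, by forgetting the two flanking paths, to such a node-disjoint subgraph homeomorphism, so the chain above applies. For the $(\Rightarrow)$ direction, I would promote a bare homeomorphism $(\phi,\psi)$ delivered by the chain to a full $st$-embedding. Since $\W$ contains a directed $a\leadsto d$ route and $G$ is $st$-connected, I can pick a simple $st$-path of the form $s\leadsto\phi(a)\leadsto\phi(d)\leadsto t$ by concatenating a simple $s\leadsto\phi(a)$ path, the $\psi$-image of a chosen $a\leadsto d$ route in $\W$, and a simple $\phi(d)\leadsto t$ path; acyclicity lets us prune repetitions and keep the result simple. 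The prefix up to $\phi(a)$ and the suffix from $\phi(d)$ are the candidate flanking paths. If either one meets $\mathrm{Image}(\psi)$ at a non-endpoint, I truncate at the first (respectively last) such contact and splice along the $\W$-subdivision back to $\phi(a)$ (respectively $\phi(d)$), using acyclicity to prevent the splice from creating a cycle. Acyclicity also forces the two flanking paths to be mutually node-disjoint, because a shared vertex would have to precede $\phi(a)$ on one side and follow $\phi(d)$ on the other, yielding a directed cycle via $\psi$.

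The main obstacle is this last reconciliation step: turning Duffin's bare subdivision into an $st$-embedding satisfying the simultaneous disjointness demands of Definition~\ref{def:stEmbedding}. Most of the work is carried by acyclicity (which supplies a topological order) together with $st$-connectedness (which supplies the connecting paths); the remainder is routine bookkeeping and introduces no new ideas beyond the two cited results.
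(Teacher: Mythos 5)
Your proposal is correct and follows essentially the paper's own route: the paper obtains this Fact exactly by observing that an acyclic $st$-connected net is irredundant, and then chaining the characterisation of \cite{ChenEtal15} (vulnerable iff not TTSP) with Duffin's theorem \cite{Duf65} (an acyclic two-terminal graph is TTSP iff it has no subgraph homeomorphic to $\W$). The only wrinkle is your truncate-and-splice repair in the $(\Rightarrow)$ direction: that case is vacuous, since every vertex of ${\mathit Image}(\psi)$ is reachable from $\phi(a)$ and reaches $\phi(d)$, so in an acyclic graph any paths $s\leadsto\phi(a)$ and $\phi(d)\leadsto t$ (which exist by $st$-connectedness) are automatically simple, internally disjoint from the subdivision, and disjoint from each other by the cycle argument you give at the end --- which is fortunate, because ``splicing back along the $\W$-subdivision to $\phi(a)$'' as you describe it would mean traversing directed edges backwards and is not a legitimate move.
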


\subsection{From Cyclic to Acyclic Graphs}
\label{sec:fromCyclicToAcyclic}

Here we show soundness of our approach 
by showing that removing edges 
that do not lie on a simple $st$-path 
(i.e., redundant edges) does not affect
vulnerability of the original graph.

Notationally, we write $G' \subseteq G$ whenever $G=(V,E)$ and $G'=(V,E')$, with $E' \subseteq E$;
notation $G' \subset G$ has a similar meaning, with $E' \subset E$.

\begin{lemma}
\label{lemma:equivalentNashFlows}
Let $G'\subseteq G$ and $\varphi$ be a Wardrop flow for $(G,r,l)$. 
If $SP(G')=SP(G)$ then $\varphi$ is a  Wardrop flow for $(G',r,l)$.
\end{lemma}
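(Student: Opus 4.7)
The plan is to argue that the Wardrop equilibrium condition is preserved verbatim because the set of simple $st$-paths (on which flows are defined) is unchanged, the induced edge flows are unchanged on all edges that remain, and every $st$-path of $G'$ is still an $st$-path of $G$.

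\medskip
\noindent\textbf{Step 1: $\varphi$ is a well-defined feasible flow for $(G',r,l)$.}
Since $SP(G')=SP(G)$, the function $\varphi$ (whose domain is $SP(G)$) is literally the same object as a function with domain $SP(G')$, so it is a flow for $G'$. Its value is $\sum_{p\in SP(G)}\varphi(p)=r$, and this equals $\sum_{p\in SP(G')}\varphi(p)$, so feasibility is inherited.

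\medskip
\noindent\textbf{Step 2: Edge flows, and hence latencies, are unchanged.}
For any edge $e\in E'$ I would compute
\[
\varphi_{G'}(e)\;=\!\!\sum_{p\in SP(G'):\,e\in p}\!\!\varphi(p)\;=\!\!\sum_{p\in SP(G):\,e\in p}\!\!\varphi(p)\;=\;\varphi_G(e),
\]
using $SP(G)=SP(G')$. (For edges $e\in E\setminus E'$, the hypothesis $SP(G)=SP(G')$ implies such an edge lies on no simple $st$-path of $G$, so $\varphi_G(e)=0$; these edges are irrelevant in $G'$ anyway.) Consequently, for any $st$-path $q$ that exists in both graphs, $l_q(\varphi)=\sum_{e\in q}l_e(\varphi(e))$ has the same value whether computed in $G$ or in $G'$.

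\medskip
\noindent\textbf{Step 3: The inequality condition transfers.}
Let $p,q$ be $st$-paths in $G'$ with $\varphi(p)>0$. Then $p\in SP(G')=SP(G)$, and since $G'\subseteq G$, the path $q$ is also an $st$-path in $G$. Applying the Wardrop condition for $\varphi$ in $(G,r,l)$ to the pair $(p,q)$ gives $l_p(\varphi)\le l_q(\varphi)$, and by Step 2 these latency values are the same as the ones computed in $G'$. Hence $\varphi$ satisfies the Wardrop condition in $(G',r,l)$ as well.

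\medskip
The only mildly delicate point is Step 2: one must check that the restriction of $\varphi$ to $SP(G')$ induces exactly the same edge flows as $\varphi$ in $G$, which is immediate from $SP(G)=SP(G')$ but is the real content of the lemma. Once that is in place, everything else is a direct inheritance of conditions, with no new inequalities to verify.
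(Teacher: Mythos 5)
Your proof is correct and follows essentially the same route as the paper's: positive-flow paths are simple and hence lie in $SP(G)=SP(G')$, every $st$-path of $G'$ is an $st$-path of $G$, so the Wardrop inequalities transfer directly. Your Steps 1 and 2 simply make explicit (feasibility and the invariance of induced edge flows) what the paper treats as immediate from the fact that flows are defined on $SP(G)=SP(G')$.
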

\begin{proof}
By definition, $\varphi$ is a Wardrop flow if and only if, 
for every $p,q \in P(G)$ such that $\varphi(p) > 0$, it holds that $l_p(\varphi) \leq l_q(\varphi)$. 
By definition, $\varphi$ assigns positive flow only to
acyclic $st$-paths in $G$; thus, $p \in SP(G) = SP(G')$. Moreover, 
since $G'\subseteq G$, it holds that $P(G') \subseteq P(G)$. Thus, trivially, for every $p,q \in P(G')$ such that
$\varphi(p) > 0$, it holds that $l_p(\varphi) \leq l_q(\varphi)$; this means that $\varphi$ 
is a  Wardrop flow for $(G',r,l)$.
\end{proof}

\begin{proposition}
\label{lemma:equivalentGraphs}
Let $G'\subseteq G$.
If $SP(G')=SP(G)$ then $G$ is vulnerable if and only if $G'$ is vulnerable. 
\end{proposition}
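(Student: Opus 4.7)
The plan is to leverage Lemma~\ref{lemma:equivalentNashFlows} to show that $L(G,r,l) = L(G',r,l|_{G'})$ whenever $SP(G)=SP(G')$, and then handle each direction of the biconditional by transferring the certificate of vulnerability between $G$ and $G'$.

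First I would record the following consequence of Lemma~\ref{lemma:equivalentNashFlows}: if $SP(G)=SP(G')$, then for any latency $l$ on $G$, $L(G,r,l) = L(G',r,l|_{G'})$. Indeed, every Wardrop flow $\varphi$ for $(G,r,l)$ only assigns positive flow to paths in $SP(G)=SP(G')$, and these paths use only edges that are in $G'$ (otherwise they could not be acyclic in $G'$ with the same set of simple $st$-paths); hence $l$ and $l|_{G'}$ induce the same edge-latencies on the support of $\varphi$. By Lemma~\ref{lemma:equivalentNashFlows}, $\varphi$ is also a Wardrop flow for $(G',r,l|_{G'})$, and the common latency of its positive-flow paths is the same quantity computed in both nets.

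For the forward direction, suppose $G$ is vulnerable via $(r,l,H)$ with $H\subseteq G$ and $L(G,r,l) > L(H,r,l|_H)$. Define $H' := H\cap G'$ (same vertex set, edges $E_H\cap E_{G'}$). Any edge $e\in E_H\setminus E_{G'}$ is redundant in $G$ (because $SP(G)=SP(G')$ means no simple $st$-path of $G$ uses it), and since $H\subseteq G$ every simple $st$-path of $H$ is a simple $st$-path of $G$, so $e$ is redundant in $H$ as well. Deleting these edges preserves the set of simple $st$-paths, so $SP(H)=SP(H')$. Applying the consequence above to $H'\subseteq H$ gives $L(H,r,l|_H) = L(H',r,l|_{H'})$, and applying it to $G'\subseteq G$ gives $L(G,r,l) = L(G',r,l|_{G'})$. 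Since $H'\subseteq G'$, the chain $L(G',r,l|_{G'}) = L(G,r,l) > L(H,r,l|_H) = L(H',r,l|_{H'})$ witnesses vulnerability of $G'$.

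For the reverse direction, suppose $G'$ is vulnerable via $(r,l',H')$ with $H'\subseteq G'$. Extend $l'$ to a latency function $l$ on $G$ by assigning arbitrary continuous non-decreasing latencies to the edges of $E_G\setminus E_{G'}$ (for instance, the constant $0$). Take $H:=H'$, which is a subgraph of $G'\subseteq G$. Then $l|_H = l'|_{H'}$, so $L(H,r,l|_H) = L(H',r,l'|_{H'})$, and by the consequence above $L(G,r,l) = L(G',r,l')$, giving $L(G,r,l) > L(H,r,l|_H)$.

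The only real subtlety is the forward direction, where the witness subgraph $H$ may contain edges not present in $G'$; the main obstacle is therefore checking that $H\cap G'$ keeps exactly the same set of simple $st$-paths as $H$. That step relies on the simple but crucial remark that redundancy is inherited from $G$ to every subgraph of $G$, which transfers $SP(G)=SP(G')$ down to $SP(H)=SP(H')$.
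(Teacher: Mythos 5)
Your proof is correct and takes essentially the same route as the paper: both hinge on Lemma~\ref{lemma:equivalentNashFlows}, used to get $L(G,r,l)=L(G',r,l|_{G'})$ whenever $SP(G)=SP(G')$, and to replace the witness subgraph $H$ by its part inside $G'$ after observing that the discarded edges lie on no simple $st$-path of $H$. The only difference is cosmetic: you delete all edges of $E_H\setminus E_{G'}$ at once via $H\cap G'$ and invoke the lemma a single time, whereas the paper removes them one by one and argues termination of that iteration (and it dismisses the $G'\Rightarrow G$ direction as trivial, which you spell out by extending the latency function to the missing edges).
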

\begin{proof}
Vulnerability of $G'$ trivially entails vulnerability of $G$. Let us prove the opposite implication.
Let $H\subset G$ be such that $L(H,r,l|_H)<L(G,r,l)$, for some $r$ and $l$. 
It cannot be $H = G'$, otherwise $SP(H) = SP(G') = SP(G)$ and,
because of Lemma~\ref{lemma:equivalentNashFlows}, 
we would have $L(H,r,l|_H)=L(G,r,l)$. 

If $H \subset G'$, let $\varphi$ be a Wardrop flow for $G$.
By Lemma~\ref{lemma:equivalentNashFlows},
$\varphi$ is a Wardrop flow for $G'$ and hence $L(G,r,l) = L(G',r,l)$; 
thus, $L(H,r,l)<L(G',r,l)$, i.e. $G'$ is vulnerable.

Otherwise, there is an edge $e$ belonging to $H$ but not to $G'$;
this means that $e$ only belongs to cyclic $st$-paths of $G$, because by hypothesis $SP(G')=SP(G)$. 
%
Then, consider $H'$, obtained by removing $e$ from $H$.
Since $P(H) \subseteq P(G)$, we have that $e$ only belongs to cyclic $st$-paths of $H$;
thus, $SP(H')=SP(H)$.
Now, let $\varphi'$ be a Wardrop flow for $H$.
By Lemma~\ref{lemma:equivalentNashFlows},
$\varphi'$ is a Wardrop flow for $H'$ and $L(H,r,l|_H)=L(H',r,l|_{H'})$.
If $H' \subset G'$, then $L(H',r,l|_{H'})<L(G',r,l|_{G'})$ and $G'$ is vulnerable.
Otherwise, we can find another edge to be removed from $H'$; but this
procedure has to terminate, eventually yielding that $G'$ is vulnerable, as desired.
\end{proof}

\newcommand{\I}{\mathsf{E}}
\renewcommand{\O}{\mathsf{X}}
\newcommand{\N}{\mathsf{N}}
\newcommand{\e}{\varepsilon}
\renewcommand{\u}{\xi}
\newcommand{\last}[1]{\hat{#1}}
\newcommand{\fyly}[2]{[#1 \leadsto #2]}
\newcommand{\fyln}[2]{[#1 \leadsto #2)}
\newcommand{\fnly}[2]{(#1 \leadsto #2]}
\newcommand{\fnln}[2]{(#1 \leadsto #2)}
\newcommand{\Nfyly}[3]{[#2 \stackrel {#1} \leadsto #3]}
\newcommand{\Nfyln}[3]{[#2 \stackrel {#1} \leadsto #3)}
\newcommand{\Nfnly}[3]{(#2 \stackrel {#1} \leadsto #3]}
\newcommand{\Nfnln}[3]{(#2 \stackrel {#1} \leadsto #3)}

\subsection{Dealing with Cycles}
\label{sec:remCycles}

In this section, we present our analysis of cycles in a graph.
We distinguish three kinds of cycles.
In all cases, by analysing a cycle $C$ in a net $(G,s,t)$, 
we will come up with one of the following outcomes:
\begin{enumerate*}
\item we find at least one redundant edge in $C$; or
\item we find an $st$-embedding of $\W$ in $G$; or
\item we find a strictly smaller (according to Definition~\ref{def:distance}) cycle $C'$.
\end{enumerate*} 

As usual, we start by giving some preliminary definitions and results.
To this aim, we adopt the following notation: given a path $p \triangleq u_1e_1u_2e_2 \ldots e_n u_{n+1}$, for
$n \geq 0$, we let:

\begin{tabular}{ll}
\hspace*{-.5cm}
$\fyly{u_1}{u_{n+1}} \triangleq u_1e_1u_2e_2 \ldots e_n u_{n+1} (= p)$
\qquad &
$\fyln{u_1}{u_{n+1}} \triangleq u_1e_1u_2e_2 \ldots e_n$
\\
\hspace*{-.5cm}
$\fnly{u_1}{u_{n+1}} \triangleq e_1u_2e_2 \ldots e_n u_{n+1}$
\qquad &
$\fnln{u_1}{u_{n+1}} \triangleq e_1u_2e_2 \ldots e_n$
\end{tabular}

%



\begin{definition}
Let $(G,s,t)$ be a net and let $C$ be a simple cycle in $G$. 

Any simple path of the form $s\leadsto u$ such that $u \in C$ and $C \cap \fyln s u = \emptyset$ 
is called an {\em entry path} (in $C$) and $u$ is said an {\em entry node};
notationally, we shall always denote entry nodes with $\e$. 

Any simple path of the form $u\leadsto t$ such that $u \in C$ and $C \cap \fnly u t = \emptyset$ 
is called an {\em exit path} (from $C$) and $u$ is an {\em exit node};
notationally, we shall always denote exit nodes with $\u$. 
\end{definition}

Clearly, any vertex of a cycle $C$ can be both an entry and an exit for $C$ 
(or it can be neither an entry nor an exit for $C$). 
If $G$ is $st$-connected, then every cycle must have at least one entry and one exit node.
When the cycle has just one entry node or just one exit node, we can easily find at least 
one redundant edge (see Lemma \ref{fact:noteu}). 
Otherwise, we distinguish two kinds of cycles, depending on how entry 
and exit nodes interleave (see Definition \ref{def:splittable}).

\begin{lemma}
\label{fact:noteu}
Let $(G,s,t)$ be an $st$-connected net and $C$ a cycle of $G$
with at most one entry or one exit node.
Then $C$ contains at least one redundant edge.
\end{lemma}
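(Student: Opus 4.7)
The plan is to use Fact~\ref{lem:char-irr-edge} and show that the unique edge of $C$ pointing into the sole entry (resp.~out of the sole exit) node is redundant.

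First, I would observe that the two alternatives in the hypothesis are dual: reversing all edges of $G$ and swapping $s$ with $t$ turns entry nodes into exit nodes and preserves redundancy, so it suffices to treat the case ``at most one entry node''. Since $G$ is $st$-connected, every vertex of $C$ lies on some $st$-path and is therefore reachable from $s$; extracting a simple path from $s$ to any chosen vertex of $C$ and taking its first vertex in $C$ produces an entry node. Thus $C$ admits \emph{exactly} one entry node, which I shall call $\e$. Write $C$ as $\e = v_0 \to v_1 \to \cdots \to v_{n-1} \to v_0$ (or, in the degenerate case $n=1$, as the self-loop $\e \to \e$), and let $e^\star$ be the unique edge of $C$ entering $\e$, namely $v_{n-1} \to \e$ (or the self-loop itself).

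Second, I would show $e^\star$ is redundant. The self-loop case is immediate, as no simple path can traverse a self-loop. For $n \geq 2$, I argue by contradiction using Fact~\ref{lem:char-irr-edge}: if $e^\star$ were irredundant, there would exist node-disjoint simple paths $\pi_1 \colon s \leadsto v_{n-1}$ and $\pi_2 \colon \e \leadsto t$. The path $\pi_1$ ends at $v_{n-1} \in C$, so it has a well-defined first vertex in $C$; call it $v_j$. By definition of entry node and the uniqueness assumption, $v_j = \e$. Since $\e = v_0 \neq v_{n-1}$ (as $C$ is simple and $n \geq 2$), the vertex $\e$ occurs on $\pi_1$ strictly before its endpoint. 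But $\e$ is also the starting vertex of $\pi_2$, so the two paths share the vertex $\e$, contradicting node-disjointness. Hence no such pair of paths exists, and $e^\star$ is redundant by Fact~\ref{lem:char-irr-edge}.

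Putting these together establishes the lemma. The main point requiring care is the ``first intersection with $C$'' step: one must verify that this first vertex is indeed an entry node in the sense of the definition (i.e., that the initial segment of the simple path avoids $C$ entirely), which follows immediately from minimality. Apart from this and the bookkeeping around the self-loop boundary case, the argument is a straightforward application of Fact~\ref{lem:char-irr-edge} combined with the pigeonhole-style observation that every approach to $C$ from $s$ must funnel through $\e$.
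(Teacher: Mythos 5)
Your proof is correct and follows essentially the same route as the paper: it isolates the same edge (the edge of $C$ entering the unique entry node, dually the one leaving the unique exit node) and rests on the same observation that every simple path from $s$ reaching that edge's tail must first pass through the unique entry node. The only differences are packaging: you formalize redundancy via Fact~\ref{lem:char-irr-edge} and a reversal/duality reduction, whereas the paper argues the two symmetric cases directly from the definition.
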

\begin{proof}
Let $\e$ be the only entry node of $C$. The edge of $C$ entering into $\e$ (call it $x \rightarrow \e$) is clearly redundant, because $x$ is reachable from $s$ only via $\e$. Similarly, if $\u$ is the only exit node of $C$, the edge of $C$ exiting from $\u$ (call it $\u\rightarrow x$) is redundant, because $t$ is reachable from $x$ only via $\u$.
\end{proof}


To simplify our analysis of cycles, 
we find it useful to define {\em $s$-minimal cycles} that have the pleasant property 
of having at least an entry path that does not touch any exit path.


\begin{definition}
\label{def:distance}
Let $(G,s,t)$ be a net, $C$ be a simple cycle in $G$ and $d(u,v)$ be the length of the 
shortest path (w.r.t. the number of edges) from node $u$ to node $v$. 

The distance of $C$ from the source is $d_s(C)=\min_{u\in C} d(s,u)$. 
We will denote with $\e^*$ an entry node for $C$ such that $d_s(C)=d(s,\e^*)$.

The distance of $C$ from the target is $d_t(C)=\min_{u\in C} d(u,t)$. 
We will denote with $\u^*$ an exit node for $C$ such that $d_t(C)=d(\u^*,t)$.

We say that $C$ is {\em s-minimal} if, for every cycle $C'$ in $G$, $d_s(C)\leq d_s(C')$.
\end{definition} 

\begin{lemma}
\label{fact:minimal}
Let $(G,s,t)$ be an $st$-connected net, $C$ an $s$-minimal cycle of $G$, 
and $\e^*$ an entry node with minimum distance from $s$. 
Then, there exists a path $s\stackrel p \leadsto \e^*$ such that, for every cycle $C'$ that contains $\e^*$
and for every $\u \leadsto t$ exit path of $C'$, it holds that 
$p \cap \fnly \u t = \emptyset$.
\end{lemma}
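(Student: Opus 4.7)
The plan is to take $p$ to be any shortest path from $s$ to $\e^*$ and show it already satisfies the stated property; by $s$-minimality of $C$, $|p| = d(s,\e^*) = d_s(C)$. Suppose for contradiction some cycle $C'$ through $\e^*$ has an exit path $\u \leadsto t$ with $p \cap \fnly{\u}{t} \neq \emptyset$. I would derive a contradiction to $s$-minimality of $C$ by exhibiting a simple cycle $C''$ in $G$ with $d_s(C'') < d_s(C)$.

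To set up the construction, I would pick $w$ to be the vertex in $p \cap \fnly{\u}{t}$ lying latest along $p$ (closest to $\e^*$). Three observations drive the argument. First, $w \notin C'$, since exit paths meet $C'$ only at $\u$ and $w \neq \u$ by definition of $\fnly{\u}{t}$. Second, hence $w \neq \e^*$, so $w$ lies strictly before $\e^*$ on $p$, giving $d(s,w) < d_s(C)$. Third, $p \cap C' = \{\e^*\}$: any $v \in p \cap C'$ satisfies $d(s,v) \ge d_s(C') \ge d_s(C) = |p|$, forcing $v$ to be the terminal vertex of $p$.

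Next I would assemble $C'' \define \alpha\beta\gamma$, where $\alpha$ is the arc of $C'$ from $\e^*$ to $\u$, $\beta$ is the segment of the exit path from $\u$ to $w$, and $\gamma$ is the segment of $p$ from $w$ to $\e^*$. Each of these three paths is simple in isolation; the real work lies in checking their pairwise intersections. Two of the checks are immediate: $\alpha \cap \beta = \{\u\}$ from the exit-path condition, and $\alpha \cap \gamma = \{\e^*\}$ by the third observation above. The remaining check $\beta \cap \gamma = \{w\}$ is the main obstacle, and is precisely where the latest-intersection choice of $w$ is crucial: any additional shared vertex would lie on $\gamma$ strictly past $w$ and on $\fnly{\u}{t}$, contradicting maximality of $w$ (with the degenerate sub-case $\u = \e^*$ handled by the third observation).

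Once $C''$ is certified as a simple cycle, it contains $w$, so $d_s(C'') \le d(s,w) < d_s(C)$, contradicting $s$-minimality of $C$. Hence no such intersection exists and the shortest path $p$ fulfils the lemma. The crux is the latest-intersection trick combined with the third observation; both invocations of $s$-minimality happen there, and the rest is routine bookkeeping with the definitions of entry and exit paths and the fact that every sub-path of a shortest path is itself shortest.
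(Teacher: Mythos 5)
Your proof is correct and takes essentially the same route as the paper: choose a shortest path $p$ from $s$ to $\e^*$ and, from an intersection of $p$ with an exit path of a cycle $C'$ through $\e^*$, assemble the cycle $\e^* \leadsto \u \leadsto w \leadsto \e^*$ (along $C'$, the exit path, and the tail of $p$), whose distance from $s$ is strictly smaller than $d_s(C)$, contradicting $s$-minimality of $C$. The only difference is presentational care: you pick the latest intersection point $w$ and prove $p \cap C' = \{\e^*\}$ so that the constructed cycle is certified simple, a verification the paper leaves implicit by taking an arbitrary intersection node.
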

\begin{proof}
Let $s \stackrel p \leadsto \e^*$ be a path of minimum length in $G$ entering in $\e^*$, 
$C'$ a cycle that contains $\e^*$ and $\u \stackrel q \leadsto t$ an exit path for $C'$.
If $p$ contains a node $v$ that appears in $q$
($v\not\in\{\u,\e^*\}$, otherwise $p$/$q$ would not be an exit/exit path for $C'$),
then $p$ has the form $s\stackrel{p_1}{\leadsto} v \stackrel{p_2}{\leadsto} \e^*$ 
and $q$  has the form $\u\stackrel{q_1}{\leadsto} v \stackrel{q_2}{\leadsto} t$. 
Then, the cycle  $C''=v \stackrel{p_2}{\leadsto} \e^* \stackrel{C'}{\leadsto} \u \stackrel{q_1}{\leadsto} v$ 
is such that $d_s(C'')\leq d(s,v) < d(s, \e^*) = d_s(C)$, thus contradicting $s$-minimality of $C$.
\end{proof}


\begin{definition}
\label{def:splittable}
Given a simple cycle $C$, we call it {\em splittable} if $C$ is of the form $v_1 \ldots v_k v_1$ and there
exists an $h \in \{1,\ldots,k\}$ such that all entry nodes of $C$ are in $v_1 \ldots v_h$ and all exit nodes
of $C$ are in $v_h \ldots v_k$.
\end{definition}

For a splittable cycle $C$, we can define an order relation among its vertices:
let $\e_1$ be the first entry node in $v_1 \ldots v_h$ and, for every $x,y$ occurring as vertices of $C$ with
$y \neq \e_1$, let $x \leq_C y$ if $C = \e_1 \leadsto x \leadsto y \leadsto \e_1$.
We will denote with $\last{\e}$ (resp. $\last{\u}$) the last entry (resp. exit) node in $C$ (w.r.t. $\leq_C$).
Notice that the definition of splittable cycle allows the last entry coincide with the first exit 
but does {\em not} allow the last exit coincide with the first entry.


\begin{definition}
Given a splittable cycle $C$, 
we call: 
\begin{itemize}
\item
$\I$ the {\em entry region} of $C$, that is the set of nodes in $\Nfyln C {\e_1}{\u_1}$; 
\item
$\O$ the {\em exit region} of $C$, that is the set of nodes in $\Nfyly C {\u_1}{\last\u}$, if $\u_1 \neq \last\e$,
or in $\Nfnly C {\u_1}{\last\u}$, otherwise; 
\item
$\N$ the {\em neutral region} of $C$, that is the set of nodes in $\Nfnln C {\last\u}{\e_1}$.
\end{itemize}
%
We call {\em splitter} the edge of $C$ entering into $\u_1$, if $\u_1\neq \last{\e} $, 
or the vertex $\u_1 (= \last{\e})$, otherwise.
\end{definition}

Notice that, by the above definition, every $v \in \I$ is such that $\e_1 \leq_C v <_C \u_1$
and every $v \in \O$ is such that $\last{\e} <_C v \leq_C \last{\u}$. 
This property is crucial for proving Proposition \ref{prop:hyperChordEntryExit}, that in turn
is a key building block in the proof of Lemma \ref{lemma:eu}.  
To ensure this property,
we have been forced to consider the exit region and the splitter in different ways, according 
to whether $\u_1 \neq \last\e$ or not.

\begin{definition}
A {\em chordal path} for a simple cycle $C$ is a path between two non-adjacent vertices of $C$ that
touches $C$ only in its extremes.

A {\em hyper-chord} for a simple cycle $C$ from $x$ to $y$ ($x,y\in C$) 
is a simple path $x\leadsto y$ that can be decomposed as a 
set of simple paths 
$x=x_1 \leadsto x_2 \leadsto x_3 \leadsto \ldots \leadsto x_{k-1} \leadsto x_k=y$, 
such that: 
\begin{itemize}
\item every $x_i \leadsto x_{i+1}$ is either a chordal path for $C$ or is an edge of $C$;
\item at least one $x_i \leadsto x_{i+1}$ is a chordal path for $C$.
\end{itemize}
%
A hyper-chord $x \stackrel h \leadsto y$
is said {\em neutral} for a splittable cycle $C$ if  $C \cap \Nfnln h x y \subseteq \N$.
\end{definition}

\begin{fact}
\label{propION}
Let $C$ be a splittable simple cycle. Then,
\begin{enumerate}
\item If $p$ is a path from $\I$ to $\O$ whose only vertices of $\I\cup\O$ are its extremes, then either
$p$ touches the splitter or $p$ is a neutral hyper-chord between $\I$ and $\O$.
\item If $p$ is a path from $\I$ to $\N$ whose only vertex of $\I$ is its starting vertex, then either
$p$ touches the splitter or starts with a chordal path going from $\I$ to $\O$ or to $\N$.
\end{enumerate}
\end{fact}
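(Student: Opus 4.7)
The plan is to unfold the definitions of $\I$, $\O$, $\N$, and of the splitter, to observe that together with the splitter (when the splitter is the vertex $\u_1$) they partition the vertex set of $C$, and then to trace how $p$ interacts with this partition. The key structural fact I would record first is that, along the cyclic order $\leq_C$, the splitter is the unique bridge from $\I$ into $\O$: moving around $C$ from any vertex of $\I$ to any vertex of $\O$ must cross either the edge into $\u_1$ (when $\u_1\neq\last\e$) or the vertex $\u_1$ itself (when $\u_1=\last\e$). In either regime, $\I$ and $\O\cup\N$ are not $C$-adjacent except across the splitter; this adjacency fact is what feeds the chordal-path clause.

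For (1), let $p$ go from $x\in\I$ to $y\in\O$ with no other vertex of $p$ in $\I\cup\O$. I would list the vertices of $p$ that belong to $C$ in the order they appear on $p$ as $x=x^{(0)},x^{(1)},\ldots,x^{(m)}=y$. By hypothesis every intermediate $x^{(i)}$ lies outside $\I\cup\O$, hence in $\N$ or, only in the case $\u_1=\last\e$, equals $\u_1$. If some intermediate $x^{(i)}$ equals $\u_1$, or if $y=\u_1$ and $p$ enters it via the splitter edge, then $p$ touches the splitter and we are done. Otherwise every intermediate $C$-vertex lies in $\N$, which yields neutrality. To exhibit the hyper-chord decomposition, I would cut $p$ at the $x^{(i)}$; each piece $x^{(i)}\leadsto x^{(i+1)}$ has interior disjoint from $C$, so it is either a single $C$-edge between $C$-adjacent endpoints or a chordal path between non-adjacent $C$-vertices. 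At least one piece must be a chordal path, otherwise $p$ would be a purely $C$-routing from $\I$ to $\O$ and the separator property would force it through the splitter, a contradiction.

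For (2), let $p$ start at $x\in\I$ and end in $\N$. I would follow $p$ from $x$ up to its first subsequent vertex $z$ that lies on $C$; such a $z$ exists since the endpoint of $p$ belongs to $\N\subseteq C$, and by hypothesis $z\notin\I$. If $p$ reaches $z$ through the splitter edge, or if $z$ is the splitter vertex (case $\u_1=\last\e$), then $p$ touches the splitter. Otherwise $z\in\O\cup\N$ and the initial segment $x\leadsto z$ of $p$ has interior disjoint from $C$; by the separator property, $x$ and $z$ are non-adjacent on $C$, so $x\leadsto z$ qualifies as a chordal path terminating in $\O$ or in $\N$, as required.

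The main obstacle I anticipate is the bookkeeping around the two incarnations of the splitter (an edge into $\u_1$ when $\u_1\neq\last\e$, a vertex when $\u_1=\last\e$) together with the non-adjacency clause in the definition of chordal path. Both require careful use of the cyclic order $\leq_C$ from the definition of splittable cycle, to guarantee that an $\I$-vertex is never $C$-adjacent to an $\O\cup\N$-vertex except precisely across the splitter; once this is pinned down, the rest of the argument reduces to the elementary decomposition of $p$ at its $C$-visits described above.
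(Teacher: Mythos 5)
This Fact is stated in the paper without any proof, so there is no official argument to compare against; your idea of decomposing $p$ at its visits to $C$ and exploiting the contiguity of $\I$, $\O$, $\N$ along $\leq_C$ is certainly the intended justification. However, the ``key structural fact'' on which you explicitly rest both items is false as you state it: it is not true that $\I$ and $\O\cup\N$ are $C$-adjacent only across the splitter. The edge of $C$ entering $\e_1$ has its tail at the last vertex of $\N$ (or at $\last\u$ when $\N$ is empty), so a vertex of $\N\cup\O$ is always $C$-adjacent to the $\I$-vertex $\e_1$, and this adjacency has nothing to do with the splitter. Concretely, in the cycle $a\to b\to c\to d\to e\to a$ with entry nodes $\{a,b\}$ and exit nodes $\{c,d\}$ one gets $\I=\{a,b\}$, $\O=\{c,d\}$, $\N=\{e\}$, splitter $b\to c$, and the $C$-edge $e\to a$ joins $\N$ to $\I$. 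Since you say the whole argument reduces to ``pinning down'' this property, the proof as written does not close; in item (2) it is precisely this claim you invoke to deduce that the first $C$-vertex $z$ reached from $x$ is non-adjacent to $x$, and it fails e.g.\ for $x=\e_1$ and a segment that leaves $C$ and re-enters it at the predecessor of $\e_1$ (the path $a\to f\to e$ with $f\notin C$ in the example above).

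What is true, and what your two uses actually need, is the one-directional statement: because $\I$ is the contiguous arc $[\e_1\leadsto\u_1)$ of $C$, the only edge of $C$ whose tail is in $\I$ and whose head is outside $\I$ is the edge into $\u_1$, i.e.\ the splitter (or the edge into the splitter vertex when $\u_1=\last\e$). With this replacement your step in (1) is sound — a walk made only of $C$-edges from a vertex $<_C\u_1$ to a vertex of $\O$ must traverse that edge/vertex — and in (2) the non-adjacency you need becomes ``there is no $C$-edge from $x$ to $z$'', which does hold when $z\in\O\cup\N$ is not reached through the splitter. Be aware that even the corrected argument, like Fact~\ref{propION} itself, tacitly reads ``non-adjacent'' in the directed sense and ignores multigraph corner cases (a piece of $p$ joining two consecutive vertices of $C$ by a parallel edge or an off-cycle detour is neither an edge of $C$ nor a chordal path under the paper's literal definitions); that looseness sits in the statement as much as in your proof, so the repair genuinely required of you is only the directional reformulation above.
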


\begin{proposition}
\label{prop:hyperChordEntryExit}
Let $C$ be a splittable cycle and $x \stackrel h \leadsto y$ a neutral hyper-chord from $\I$ to $\O$. 
Then, for every entry or exit path $p$, it holds that $p \cap \Nfnln h x y = \emptyset$.
\end{proposition}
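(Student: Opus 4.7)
The plan is to argue by contradiction. Suppose some $w \in p \cap \Nfnln h x y$ exists; I treat $p$ as an entry path $s \leadsto \e$, the exit case being symmetric.

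First I would dispose of the sub-case $w \in C$. Neutrality of $h$ forces $w \in \N$, but $p$ touches $C$ only at $\e$, so $w = \e$; this puts $\e$ in $\N$, which is impossible since every entry node lies in $\I$ or (in the degenerate case $\u_1 = \last\e$) equals the splitter vertex, and the four sets $\I$, $\O$, $\N$, $\{\text{splitter}\}$ are pairwise disjoint in a splittable cycle.

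The substantive case is $w \notin C$, and the idea is to splice $p$ with $h$ to produce a fresh simple path from $s$ to a forbidden region of $C$. I would choose $w_0$ to be the first vertex of $p$ (scanning from $s$ to $\e$) lying on $h$. Applied to $w_0$, the argument of the previous paragraph either yields a contradiction outright or reduces (via the borderline sub-case $\e = x = w_0$) to a configuration where the originally-assumed $w$ sits strictly before $w_0$ on $p$ yet still on $h$, violating minimality of $w_0$. So I may assume $w_0 \notin C$, hence $w_0$ lies strictly before $\e$ on $p$. Define
\[
\pi \;\define\; p|_{s \to w_0} \cdot h|_{w_0 \to y}.
\]
Minimality of $w_0$ makes the two pieces share only $w_0$, so $\pi$ is simple. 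Since $\e$ appears strictly later on $p$, the prefix $p|_{s \to w_0}$ contributes no $C$-vertex to $\pi$, while neutrality confines those of $h|_{w_0 \to y}$ to $\N \cup \{y\} \subseteq \N \cup \O$.

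To close, let $v$ be the first $C$-vertex of $\pi$: the truncation $\pi|_{s \to v}$ is a simple $s$-to-$v$ path meeting $C$ only at $v$, exhibiting $v$ as an entry node, so $v \in \I$ or $v$ equals the splitter vertex. But $v \in \N \cup \O$ by the preceding analysis, contradicting pairwise disjointness of the regions. The delicate step is the bookkeeping around $w_0$: getting the minimality choice to rule out every sub-case where $w_0$ could land on $C$, and verifying that the splice $\pi$ remains simple. Once that is in hand, the ``extract an entry node'' finishing move is essentially mechanical.
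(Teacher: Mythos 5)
Your proof is correct and follows essentially the same route as the paper's: assume an intersection and splice the entry (resp.\ exit) path with the hyper-chord to exhibit an entry (resp.\ exit) node lying in the neutral/exit (resp.\ entry/neutral) part of $C$, contradicting splittability and the disjointness of the regions. You are somewhat more meticulous than the paper's terse argument (choosing the first intersection point to guarantee simplicity of the spliced path, and treating separately the case where the intersection is a cycle vertex of the hyper-chord), but the underlying idea is the same.
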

\begin{proof} 
Let us suppose that there exists an internal node $u$ of $h$ that belongs to an exit path. 
If $u$ belongs to a chordal path $w\leadsto z$ with $w\in \I\cup\N$ and $z\in\O\cup\N$, 
then the path $w\leadsto u \leadsto t$ would be an exit path, contradicting the fact 
that $C$ is a splittable cycle (if $w\in \I$)
or the fact that $w\in \N$ (because $\N$ does contain no exit nor entry nodes).
Similarly, if $u$ belongs to an entry path, then the path $s\leadsto u \leadsto z$ would be an entry path, 
contradicting the fact that $C$ is a splittable cycle (if $z\in\O$) or the fact that $z\in \N$. 
\end{proof}

Given a simple splittable cycle $C$, we denote with $f_{\I\N}$ the first (w.r.t. $\leq_C$)
vertex of $\N$ target of a neutral hyper-chord from $\I$; we let $f_{\I\N}$ be $\e_1$ if there
is no neutral hyper-chord from $\I$ to $\N$. Similarly, we denote with $\ell_{\N\O}$ the last (w.r.t. $\leq_C$)
vertex of $\N$ source of a neutral hyper-chord to $\O$; we let $\ell_{\N\O}$ be $\last{\u}$ if there
is no neutral hyper-chord from $\N$ to $\O$. Thus, $\last{\u} <_C f_{\I\N} \leq_C \e_1$ and
$\last{\u} \leq_C \ell_{\N\O} <_C \e_1$.

\begin{lemma}
\label{lemmaNHCIO}
Let $C$ be a splittable simple cycle. 
If $f_{\I\N} \leq_C \ell_{\N\O}$, then there exists a neutral hyper-chord for $C$ from $\I$ to $\O$.
\end{lemma}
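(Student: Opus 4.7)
The plan is to exhibit an explicit neutral hyper-chord from $\I$ to $\O$ by gluing the two neutral hyper-chords witnessing $f_{\I\N}$ and $\ell_{\N\O}$ with the piece of $C$ that lies between them.

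First, I would unpack the hypothesis. The conventions $\last{\u} <_C f_{\I\N} \leq_C \e_1$ and $\last{\u} \leq_C \ell_{\N\O} <_C \e_1$, combined with $f_{\I\N} \leq_C \ell_{\N\O}$, rule out the default values and force both $f_{\I\N}$ and $\ell_{\N\O}$ to be genuine vertices of $\N$. By definition of $f_{\I\N}$ and $\ell_{\N\O}$, this yields a neutral hyper-chord $h_1$ from some $x \in \I$ to $f_{\I\N}$ and a neutral hyper-chord $h_2$ from $\ell_{\N\O}$ to some $y \in \O$.

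Next, since $\N = \Nfnln{C}{\last{\u}}{\e_1}$ is a contiguous arc of $C$ and $f_{\I\N} \leq_C \ell_{\N\O}$ both lie in $\N$, the sub-path $P$ of $C$ going from $f_{\I\N}$ to $\ell_{\N\O}$ in the $\leq_C$ direction lies entirely in $\N$. I would then form the walk $W := h_1 \cdot P \cdot h_2$ from $x$ to $y$. Because $h_1$ and $h_2$ are neutral and $P$ lies in $\N$, every vertex of $W$ that belongs to $C$ is either an endpoint (so $x \in \I$ or $y \in \O$) or a vertex of $\N$.

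Finally, I would apply the standard loop-elimination procedure to $W$ to extract a simple path $h$ from $x$ to $y$ whose edges form a subset of those of $W$; listing the $C$-vertices of $h$ in order gives $x = c_1, c_2, \ldots, c_k = y$ with each intermediate $c_i$ in $\N$. Between two consecutive $C$-vertices $c_i, c_{i+1}$ the corresponding sub-path of $h$ has no other vertex on $C$, so it is either a $C$-edge or a chordal path. Hence $h$ decomposes as a hyper-chord from $\I$ to $\O$, and neutrality follows from the fact that its intermediate $C$-vertices all belong to $\N$. I expect the main delicate point to be verifying that loop-elimination really preserves the hyper-chord decomposition — specifically, that after deleting loops, the sub-paths between successive $c_i$'s stay off $C$ internally and qualify as chordal paths (or edges of $C$) in the sense of the paper's definition; this should follow because shortcuts only identify repeated vertices of $W$ and cannot reintroduce $C$-vertices that were not already present in the corresponding segments of $h_1$, $P$, or $h_2$.
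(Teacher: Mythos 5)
Your proposal is correct, and it reaches the conclusion by a genuinely different gluing mechanism than the paper. You start exactly as the paper does: the hypothesis $f_{\I\N} \leq_C \ell_{\N\O}$ rules out both default values, so there are a neutral hyper-chord $h_1$ from $\I$ ending at $f_{\I\N}$ and a neutral hyper-chord $h_2$ from $\ell_{\N\O}$ into $\O$. The paper then splits into two cases and chooses splice points so that the result is simple by construction: if $h_1$ and $h_2$ intersect, it cuts at the first intersection along $h_1$; if not, it picks an ``innermost'' pair $x\in h_1$, $y\in h_2$ with $x <_C y$ and no interleaving pair, and joins prefix, $C$-arc $x\leadsto y$, and suffix. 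You instead concatenate $h_1$, the arc of $C$ from $f_{\I\N}$ to $\ell_{\N\O}$ (which lies in $\N$ because $\N$ is a contiguous arc), and $h_2$, and restore simplicity by loop-erasure. This handles both of the paper's cases uniformly and avoids the innermost-pair argument; what it costs is the check you flag, which does go through: loop-erasure only deletes vertices and keeps a subset of the walk's edges, so every $C$-vertex of the erased path is still in $\{x\}\cup\N\cup\{y\}$ (and $x$, $y$ occur only once in the walk, so they remain the endpoints), and cutting the path at its $C$-vertices gives segments that are internally off $C$ or single edges of $C$. The one clause you should state explicitly is that the result is not a concatenation of $C$-edges only (the definition of hyper-chord demands at least one chordal path): this is immediate because every edge of $C$ occurring in your walk joins two vertices of $\N$ --- the $C$-successor of a vertex of $\I$ and the $C$-predecessor of a vertex of $\O$ never lie in $\N$, so neither $h_1$ nor $h_2$ can begin, respectively end, with an edge of $C$ --- hence the segment of the erased path leaving $x\in\I$ is off $C$. (A shared, harmless looseness: a spliced off-$C$ segment could join two adjacent vertices of $C$ and thus not be literally a ``chordal path'' as defined; the paper's own splicing has the same feature, and nothing downstream uses non-adjacency.)
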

\begin{proof}
First, observe that there must exist a neutral hyper-chord $p$ from $\I$ to $\N$, otherwise 
$f_{\I\N} \define \e_1 >_C \ell_{\N\O}$. Similarly, there must exist a neutral hyper-chord $q$ 
from $\N$ to $\O$, otherwise $\ell_{\N\O} \define \last{\u} <_C f_{\I\N}$.  

If $p$ and $q$ intersect, call $a$ the first intersection along $p$ with $q$. Then, consider $p'$, the prefix
of $p$ ending in $a$, and $q'$, the suffix of $q$ starting from $a$. Then, the path $p'\!,\!q'$ is 
a neutral hyper-chord from $\I$ to $\O$.

Consider now the case in which $p$ and $q$ do not intersect; thus, $f_{\I\N} <_C \ell_{\N\O}$.
Thus, we can always find a pair of vertices $x \in p$ and $y \in q$ such that $x <_C y$ and
there exists no other pair $x' \in p$ and $y' \in q$ such that $x \leq_C x' <_C y' \leq_C y$.
Then, consider $p'$, the prefix of $p$ ending in $x$, and $q'$, the suffix of $q$ starting from $y$. 
Then, $p',\! x \stackrel C \leadsto y,\! q'$ is a neutral hyper-chord from $\I$ to $\O$.
\end{proof}

\begin{lemma}
\label{lemmaRedEdges}
If $C$ is a splittable simple cycle without neutral hyper-chords from $\I$ to $\O$,
then all edges of $C$ between $\ell_{\N\O}$ and $f_{\I\N}$ are redundant.
\end{lemma}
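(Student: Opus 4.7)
I would argue by contradiction: suppose some edge $e = v \to w$ of $C$ on the segment from $\ell_{\N\O}$ to $f_{\I\N}$ is irredundant, so that there exists a simple $st$-path $\pi$ through $e$. The first $C$-vertex visited by $\pi$ is then an entry node, hence lies in $\I$; call it $\e$. Symmetrically, the last $C$-vertex of $\pi$ is an exit node in $\O$; call it $\u$. Write $\mu$ for the sub-path of $\pi$ from $\e$ to $\u$: it is a simple path in $G$ from $\I$ to $\O$ containing $e$, and the goal is to extract from $\mu$ a neutral hyper-chord from $\I$ to $\O$, contradicting the standing hypothesis.

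The central local observation is this: any simple sub-path of $\mu$ from an $\I$-touch $x$ to an $\O$-touch $y$ whose intermediate $C$-touches all lie in $\N$ is either the single $C$-edge $\last{\e} \to \u_1$ (possible only when the splitter is an edge) or a hyper-chord containing at least one chordal path, which is then automatically neutral and contradicts the hypothesis. The splitter-edge alternative cannot contain $e$, because $v$ and $w$ belong to $\N \cup \{\last{\u}, \e_1\}$ whereas the splitter edge runs from $\last{\e} \in \I$ to $\u_1 \in \O$.

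I would then take $x$ to be the last $\I$-touch of $\mu$ before $e$ and $y$ the first $\O$-touch of $\mu$ after $e$; both exist because $\e$ precedes $e$ and $\u$ follows. If the piece $x \leadsto y$ of $\mu$ has all its intermediate $C$-touches in $\N$, the local observation delivers the forbidden neutral hyper-chord containing $e$ and we are done. Otherwise there are either extra $\O$-touches between $x$ and $e$ or extra $\I$-touches between $e$ and $y$, and I would apply the local observation iteratively to smaller pieces. An extra $\O$-touch $y'$ before $e$ pins the piece $x \leadsto y'$ as the splitter edge and forces $x = \last{\e}$; then either the piece $w \leadsto y$ is a neutral hyper-chord from $\N$ to $\O$ whose source $w$ is strictly $>_C \ell_{\N\O}$ (contradicting the definition of $\ell_{\N\O}$), or it likewise pins a second $\I$-touch of $\mu$ at $\last{\e}$, violating simplicity. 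The symmetric case of an extra $\I$-touch between $e$ and $y$ yields either a neutral hyper-chord from $\I$ ending strictly before $f_{\I\N}$ (contradicting the definition of $f_{\I\N}$) or once again a repeated visit to $\last{\e}$.

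The step I expect to be most delicate is the bookkeeping in the edge-splitter regime: every ``extra'' $\I$- or $\O$-touch of $\mu$ can only be absorbed by equating its position with $\last{\e}$ or $\u_1$, and one must verify that every legal configuration of such extras ultimately forces two distinct positions of $\mu$ to carry the same vertex. The splitter-vertex configuration ($\u_1 = \last{\e}$) is in fact easier, because then no pure $C$-path from $\I$ to $\O$ exists at all, so the local observation has no ``splitter-edge escape'' and the contradiction appears at the very first application.
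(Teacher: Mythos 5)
Your overall strategy is the paper's own: decompose the witness simple $st$-path at its touches with $C$, exploit the absence of neutral hyper-chords from $\I$ to $\O$ together with the extremality of $f_{\I\N}$ and $\ell_{\N\O}$, and observe that the only non-contradictory way a piece of the path can run from $\I$ to $\O$ through $\N$ is via the splitter, which simplicity then prevents from being used on both sides of $e$. In the regime where the splitter is an edge ($\u_1 \neq \last{\e}$) your bookkeeping is correct and matches the paper's cases, up to the harmless imprecision that the pure-$C$ escape is the $C$-edge entering $\u_1$, whose source is the last vertex of $\I$ and need not be $\last{\e}$.

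The gap is your treatment of the splitter-vertex regime $\u_1 = \last{\e}$, which you dismiss as ``easier'' with the contradiction appearing ``at the very first application''. The opposite is true: in that regime the splitter vertex lies in none of $\I$, $\O$, $\N$, and this breaks your setup in two ways. First, that vertex is itself both an entry and an exit node, so the first $C$-vertex of $\pi$ need not lie in $\I$ and the last need not lie in $\O$; your $x$ (last $\I$-touch before $e$) and $y$ (first $\O$-touch after $e$) may simply fail to exist. Second, even when they exist, the piece $x \leadsto y$ can pass through the splitter vertex as an intermediate $C$-touch without any extra $\I$- or $\O$-touch; it is then a hyper-chord from $\I$ to $\O$ that is \emph{not} neutral, so the standing hypothesis gives no contradiction, and your trichotomy (all intermediates in $\N$ / extra $\O$-touch before $e$ / extra $\I$-touch after $e$) does not cover this configuration. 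Closing it requires the extra device the paper uses: enlarge $\I$ to $\I \cup \{\last{\e}\}$ and, when the last pre-edge touch of $\pi$ is the splitter vertex, shift the analysis to the post-edge segment up to the exit node that $\pi$ must later use (by simplicity this exit lies in $\O$, not at the splitter again); that segment cannot touch $\I$ without either re-touching the splitter or yielding a forbidden neutral $\I$-to-$\O$ hyper-chord, so it is a neutral hyper-chord from $w$ to $\O$, contradicting the maximality of $\ell_{\N\O}$ because $\ell_{\N\O} \leq_C v <_C w$ (and giving a direct contradiction if $w = \e_1 \in \I$). Without this additional argument the proof does not cover the case $\u_1 = \last{\e}$.
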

\begin{proof}
By Lemma \ref{lemmaNHCIO}, $\ell_{\N\O} <_C f_{\I\N}$ and, by
contradiction, assume that there exists an irredundant edge  
$x \rightarrow y \in \ell_{\N\O} \stackrel C \leadsto f_{\I\N}$. 
This implies that $\ell_{\N\O} \leq_C x$ and $y \leq_C f_{\I\N}$.
Let $\hat \I$ be $\I$, if $\last{\e} \neq \u_1$, and be $\I \cup \{\last{\e}\}$, otherwise.
If the edge $x\rightarrow y$ is irredundant, there should exist a path $p$ from $\hat \I$ to $x$; 
let us call $p'$ the suffix of $p$ 
starting from the last vertex in $\hat \I$ touched by $p$. 
If such a vertex belongs to $\I$, then, by Fact \ref{propION}(2), we only have two possibilities:
\begin{enumerate}
\item[(a)] $p'$ starts with a chordal path from $\I$ to some $x' \in \N$. Then, $p'$ has to reach $x$ but this
cannot be done by touching vertices of $\I \cup \O$ (by construction of $p'$ and by hypothesis); hence,
$p'$ is a neutral hyper-chord from $\I$ to $x$. If $x = \ell_{\N\O} = \last{\u} \in \O$, 
we would contradict the hypothesis that there is no neutral hyper-chord from $\I$ to $\O$. 
Hence $x \in \N$ and, by definition, $f_{\I\N} \leq_C x$; since $x <_C y$, we would contradict $y \leq_C f_{\I\N}$.

\item[(b)] $p'$ touches the splitter. In this case, $p'$ must reach $x$ by jumping at least one exit
node $\u$, that should be used to reach $t$ after touching $x \rightarrow y$. Let $p''$ be the path from
$y$ to $\u$. It cannot touch any vertex in $\I$ otherwise, by Fact \ref{propION}(1), it would touch
the splitter (but then 
$p$ would be cyclic) or would
contain a neutral hyper-chord from $\I$ to $\O$. So, it must be that $p''$ is a neutral hyper-chord from $y$
to $\O$. Like before, if $y = f_{\I\N} = \e_1 \in \I$, 
we would contradict the hypothesis that there is no neutral hyper-chord from $\I$ to $\O$. 
Hence $y \in \N$ and, by definition, $y \leq_C \ell_{\N\O}$; 
since $x <_C y$, this would contradict $\ell_{\N\O} \leq_C x$.
\end{enumerate}
If $\last{\e} = \u_1$, it can also be possible that the last vertex of $\hat \I$ touched by $p$ is the splitter; we reason like in case (b) above.
\end{proof}

Now, we are ready to prove our main result about splittable cycles.

\begin{lemma}
\label{lemma:eu}
Let $G$ be an $st$-connected multidigraph and $C$ be an $s$-minimal splittable cycle in $G$
with at least two entry and two exit nodes. Then either:
\begin{itemize}
\item $C$ contains at least one redundant edge; or
\item $\W$ is $st$-embeddable in $G$; or
\item there exists an $s$-minimal cycle $C'$ such that $d_t(C')<d_t(C)$.
\end{itemize}
\end{lemma}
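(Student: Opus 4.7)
The plan is to split on whether $C$ admits a neutral hyper-chord from $\I$ to $\O$. If no such hyper-chord exists, Lemma~\ref{lemmaRedEdges} applied to $C$ immediately gives that every edge of $C$ between $\ell_{\N\O}$ and $f_{\I\N}$ is redundant, yielding the first conclusion.

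Otherwise, I would fix a neutral hyper-chord $x \stackrel{h}{\leadsto} y$ with $x \in \I$ and $y \in \O$, and try to construct an $st$-embedding of $\W$ by setting $\phi(b)=x$, $\phi(c)=y$ and using $h$ as the image of the diagonal edge $b\to c$. The two outer branches $a\to b\to d$ and $a\to c\to d$ would be built out of segments of $C$ together with entry and exit paths; this is precisely where the hypothesis of at least two entries and at least two exits enters, since each outer branch needs an independent route into the cycle (at $\e_1$ or $\last\e$) and an independent route out (at $\u_1$ or $\last\u$). Lemma~\ref{fact:minimal} supplies a canonical entry path to the entry $\e^*$ closest to $s$, and Proposition~\ref{prop:hyperChordEntryExit} guarantees that entry and exit paths avoid the internal vertices of $h$; together these rule out many potential conflicts between the components of the candidate embedding.

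If all the required entry paths, cycle segments, and exit paths can be routed pairwise node-disjointly, the construction yields an $st$-embedding of $\W$, giving the second conclusion. Otherwise I would analyse the possible obstructions. A forced intersection of two exit paths, or of an exit path with a cycle segment on the ``outgoing'' side of the embedding, produces a new simple cycle that contains a vertex strictly closer to $t$ than any vertex of $C$; taking an $s$-minimal cycle $C'$ containing such a vertex then yields the third conclusion. Intersections on the entry side cannot give a cycle with strictly smaller $d_s$ (by $s$-minimality of $C$), and by Lemma~\ref{fact:minimal} we may already assume the canonical entry path to $\e^*$ is disjoint from exit paths of $C$; the remaining entry-side obstructions either collapse to a case already handled or expose an edge of $C$ that lies on no simple $st$-path, again yielding the first conclusion.

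The hard part will be the bookkeeping in the hyper-chord branch. One has to distinguish whether $\u_1 \neq \last\e$ or $\u_1 = \last\e$ (since the splitter and the region $\O$ behave differently), and within each subcase compare the cyclic positions of $x$, $\last\e$, $\u_1$, $y$ to decide which combination of entry paths, exit paths and cycle segments gives pairwise disjoint routes. Translating each way in which the $\W$-construction can fail into either a redundant edge of $C$ or a strictly smaller (in $d_t$) cycle, using only Proposition~\ref{prop:hyperChordEntryExit}, Lemmas~\ref{fact:minimal}, \ref{lemmaNHCIO}, \ref{lemmaRedEdges}, and the explicit structure of splittable cycles, is the delicate case analysis I expect to be the main obstacle.
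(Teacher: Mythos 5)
Your easy branch coincides with the paper's: if there is no neutral hyper-chord from $\I$ to $\O$, Lemma~\ref{lemmaRedEdges} gives the redundant edges. The gap is in the main branch, and it is twofold. First, the embedding template you propose does not work: if you set $\phi(b)=x$, $\phi(c)=y$ and use the hyper-chord $h$ as the diagonal $b\to c$, then the branch $b\to d$ must leave $x\in\I$ and, having no exit of its own inside $\I$, must follow $C$ forward across the splitter to reach an exit node, while the branch $a\to c$ must reach $y\in\O$ by entering $C$ at some entry node (all of which precede the splitter) and again following $C$ across the splitter. With only $C$, its entry/exit paths and the single hyper-chord available, these two branches are forced to share the splitter (and in fact a whole cycle segment) as internal vertices, so they cannot be made node-disjoint. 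The paper's construction is arranged precisely to avoid this: the hyper-chord $w\leadsto z$ is used as the image of $a\to c$ (with $\phi(a)=w\in\I$, $\phi(c)=z\in\O$), the diagonal $b\to c$ is the cycle segment from an exit node $\u'$ to $z$ inside $\O$, and $b\to d$, $c\to d$ are exit paths from $\u'$ and from a later exit node $\u''$; then only one of the five paths crosses the splitter, and Proposition~\ref{prop:hyperChordEntryExit} keeps the hyper-chord clear of the entry/exit paths.

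Second, the part you defer as ``the delicate case analysis'' is the actual content of the lemma, and your sketch of how the obstructions resolve is misassigned. Exit paths intersecting each other is not an obstruction at all: one normalizes by taking shortest exit paths that merge at a common node $t'$ and coincide afterwards (and similarly for the two entry paths, merging at $s'$). The real obstruction is the second entry path $s'\leadsto\e'$ (the one not protected by Lemma~\ref{fact:minimal}) meeting the exit paths; the paper splits this into four cases by where the intersection lies. Intersections in $\fnln{\u'}{t'}$ and/or $\fnln{\u''}{t'}$ still yield $st$-embeddings of $\W$ (with the intersection points $\omega$, or $\alpha,\beta$, promoted to branch vertices), and only an intersection on the common tail $\fyly{t'}{t}$ produces the smaller cycle. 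Moreover, your phrase ``taking an $s$-minimal cycle $C'$ containing such a vertex'' glosses over exactly what must be proved: an arbitrary cycle through a vertex near $t$ need not be $s$-minimal, and the graph's $s$-minimal cycles need not pass through it. The paper builds $C'$ explicitly through $\e^*$, which forces $d_s(C')\le d(s,\e^*)=d_s(C)$ and hence $s$-minimality, and gets $d_t(C')<d_t(C)$ because the intersection point $\omega$ lies strictly beyond $\u^*$ on a shortest $\u^*\!\leadsto t$ path disjoint from $C'$. Without the corrected template and this case analysis, the second and third conclusions of the lemma are not established.
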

\begin{proof}
If there is no neutral hyper-chord from $\I$ to $\O$, then, 
by Lemma~\ref{lemmaRedEdges}, 
all edges in $C$ between $\ell_{\N\O}$ and $f_{\I\N}$ are redundant.
Otherwise, if $w\leadsto z$ is a neutral  hyper-chord from $\I$ to $\O$ in $C$, we first observe that 
we have the following situation (left) and hence the following homeomorphic copy of $\W$ in $G$ (right):

\begin{tabular}{cc}
\hspace*{-.7cm}
\begin{minipage}{0.60\textwidth}
\begin{tikzcd}[column sep=2.5mm,row sep=4mm]
& & & \e'' 
   \ar[rr,rightsquigarrow, bend left=20] 
& & \u' 
   \ar[drr, start anchor = east, rightsquigarrow]
   \ar[dr,rightsquigarrow, bend left=20]
\\
s
\ar[r, rightsquigarrow]
& s'
\ar[urr, end anchor = west, rightsquigarrow] 
\ar[drr, end anchor = west, rightsquigarrow] 
& |[alias=W]|w
\ar[ur,rightsquigarrow, bend left=20]
& & & &
|[alias=Z]|z 
\ar[dl,rightsquigarrow, bend left=20]
&
t'
\ar[r, rightsquigarrow]
& 
t
\\
& & & \e' 
   \ar[ul,rightsquigarrow, bend left=20]
 & & \u'' 
   \ar[urr,start anchor = east, rightsquigarrow]
   \ar[ll,rightsquigarrow, bend left=20, ""{near start, name=H1}, ""{near end, name=H2}] 
\\
\ar[from=W, to=H1, ,rightsquigarrow, bend left=40, start anchor = east, end anchor = north]
\ar[from=H2, to=Z, ,rightsquigarrow, bend left=40, start anchor = north, end anchor = west]
\end{tikzcd}
\end{minipage}
&
\hspace*{-.7cm}
\begin{minipage}{0.35\textwidth}
\begin{tikzcd}[column sep=5mm,row sep=5mm]
& & \u' \ar[dd, rightsquigarrow]\ar[dr, rightsquigarrow] \\
s \ar[r, rightsquigarrow, "\cdot" description, "\e'"] 
& 
w 
\ar[ur, rightsquigarrow, "\cdot" description, "\e''"] 
\ar[dr, rightsquigarrow] & & t' \ar[r, rightsquigarrow]
& 
t
\\
& & z \ar[ur, rightsquigarrow, "\cdot" description, "\u''"'] \\
\end{tikzcd}
\end{minipage}
\vspace*{-.4cm}
\end{tabular}

\noindent
Observe that we can always assume that one between $\u'$ and $\u''$ is $\u^*$,  
the exit node at minimum distance from $t$. Moreover, 
we choose exit paths $\u'\leadsto t$ and $\u''\leadsto t$ among those of minimum length 
and that, once they intersect
for the first time (in $t'$), they become the same path (at most, $t' = t$ and this path is empty).
The same assumption holds for the entry paths $s \leadsto \e'$ and $s \leadsto \e''$ 
(with $s'$ their last intersection), where one between $\e'$ and $\e''$ is $\e^*$.
Finally, remember that a neutral hyper-chord from $\I$ to $\O$  is always disjoint from both entry and exit paths
(see Proposition \ref{prop:hyperChordEntryExit}).

If $s\leadsto \e'$ is disjoint from $\u'\leadsto t$ and $\u''\leadsto t$
(this happens, e.g., if we can let $\e^*$ play the role of $\e'$, i.e., if $\e^* \leq_C w$ -- see Lemma~\ref{fact:minimal}), what we have just shown 
is an $st$-embedding of $\W$. 
Otherwise, $\e'' = \e^*$ and
we distinguish the following scenarios, depending on where the entry path $s'\leadsto \e'$ 
touches $\u'\leadsto t$ and $\u''\leadsto t$ (by Lemma~\ref{fact:minimal}, $s \leadsto s'$ is always disjoint
from every exit path; moreover, the intersections can be only in internal nodes, otherwise the paths would
not be entry/exit paths). We consider the following cases:
\begin{enumerate}
\item $s'\leadsto \e'$ touches the exit paths only in $\fnln {\u'}{t'}$;
\label{case:x1tp}
\item $s'\leadsto \e'$ touches the exit paths only in $\fnln {\u''}{t'}$;
\label{case:x2tp}
\item $s'\leadsto \e'$ touches both exit paths, but only in $\fnln {\u'}{t'}$ and $\fnln {\u''}{t'}$;
\label{case:x12tp}
\item $s'\leadsto \e'$ touches the exit paths in $\fyly {t'}t$.
\label{case:tpt}
\end{enumerate}

\paragraph{Case~\ref{case:x1tp}:}
Let $\omega$ be the {\em last} node 
of $\fnln {\u'}{t'}$ that belongs also to $s' \leadsto \e'$.
By definition of $\omega$, the path $\fnln {\omega}{t'}$ is disjoint from $s\leadsto \omega$ 
and $\omega\leadsto \e'$.
This case is depicted in the following diagram (left) and the corresponding $st$-embedding of $\W$ is depicted on the right.

\begin{tabular}{cc}
\hspace*{-.3cm}
\begin{minipage}{0.55\textwidth}
\begin{tikzcd}[column sep=2.5mm,row sep=4mm]
& & & \e^* 
   \ar[rr,rightsquigarrow, bend left=20] 
& & \u' 
   \ar[drr, start anchor = east, rightsquigarrow, "\cdot" description, "\omega"]
   \ar[dd,rightsquigarrow, bend left=20]
\\
s \ar[r, rightsquigarrow]
& s'
\ar[urr, end anchor = west, rightsquigarrow] 
\ar[drr, end anchor = west, rightsquigarrow, "\cdot" description, "\omega"'] 
& & & & &
&
t'
\ar[r, rightsquigarrow]
& 
t
\\
& & & \e' 
   \ar[uu,rightsquigarrow, bend left=20]
 & & \u'' 
   \ar[urr,start anchor = east, rightsquigarrow]
   \ar[ll,rightsquigarrow, bend left=20, ""{near start, name=H1}, ""{near end, name=H2}] 
\\
\end{tikzcd}
\end{minipage}
&
\hspace*{-.7cm}
\begin{minipage}{0.40\textwidth}
\begin{tikzcd}[column sep=3mm,row sep=5mm]
 &&& \omega \ar[dd, rightsquigarrow, "\cdot" description, "\e'"]\ar[drrr, rightsquigarrow] \\
 s \ar[r, rightsquigarrow] & s' \ar[urr, rightsquigarrow] \ar[drr, rightsquigarrow] 
 &&&&& |[alias=T]|t' \ar[r, rightsquigarrow] & t\\
 &&& 
 |[alias=E]|\e^* 
 \ar[urrr, rightsquigarrow, "\cdot"{description, near start, name=U}, "\u'"'{near start}, 
  "\cdot"{description, near end, name=V}, "\u''"'{near end}] 
 \\
\end{tikzcd}
\end{minipage}
\end{tabular}

\paragraph{Case~\ref{case:x2tp}:}
As before, by letting $\omega$ be the {\em last} node 
of $\fnln {\u''}{t'}$ that belongs also to $s'\leadsto \e'$. 
This case is depicted in the following diagram (left) and the corresponding $st$-embedding of $\W$ is depicted on the right.

\begin{tabular}{cc}
\hspace*{-.3cm}
\begin{minipage}{0.55\textwidth}
\begin{tikzcd}[column sep=2.5mm,row sep=4mm]
& & & \e^* 
   \ar[rr,rightsquigarrow, bend left=20] 
& & \u' 
   \ar[drr, start anchor = east, rightsquigarrow]
   \ar[dd,rightsquigarrow, bend left=20]
\\
s \ar[r, rightsquigarrow]
& s'
\ar[urr, end anchor = west, rightsquigarrow] 
\ar[drr, end anchor = west, rightsquigarrow, "\cdot" description, "\omega"'] 
& & & & &
&
t'
\ar[r, rightsquigarrow]
& 
t
\\
& & & \e' 
   \ar[uu,rightsquigarrow, bend left=20]
 & & \u'' 
   \ar[urr,start anchor = east, rightsquigarrow, "\cdot" description, "\omega"']
   \ar[ll,rightsquigarrow, bend left=20, ""{near start, name=H1}, ""{near end, name=H2}] 
\\
\end{tikzcd}
\end{minipage}
&
\hspace*{-.75cm}
\begin{minipage}{0.40\textwidth}
\begin{tikzcd}[column sep=3mm,row sep=5mm]
 &&& \omega \ar[dd, rightsquigarrow, "\cdot" description, "\e'"]\ar[drr, rightsquigarrow] \\
 s \ar[r, rightsquigarrow] & s' \ar[urr, rightsquigarrow] \ar[drr, rightsquigarrow] &&&& t' \ar[r, rightsquigarrow] &t\\
 &&& \e^* \ar[urr, rightsquigarrow, "\cdot"{description}, "\u'"'] \\
\end{tikzcd}
\end{minipage}
\end{tabular}

\paragraph{Case~\ref{case:x12tp}:}
Let $\alpha$ and $\beta$ two nodes of $s'\leadsto \e'$ such that $\alpha$ belongs to 
$\fnln {\u'}{t'}$ (resp. $\fnln {\u''}{t'}$) and $\beta$ belongs to $\fnln {\u''}{t'}$ (resp. $\fnln {\u'}{t'}$) 
with no other nodes of $\fnln {\u'}{t'}$ and $\fnln {\u''}{t'}$
in $\alpha\leadsto\beta$. Therefore, we have the $st$-embedding of $\W$ below on the right. 

\begin{tabular}{cc}
\hspace*{-.38cm}
\begin{minipage}{0.55\textwidth}
\begin{tikzcd}[column sep=3mm,row sep=4mm]
& & & & \e^* 
   \ar[rr,rightsquigarrow, bend left=20] 
& & \u' 
   \ar[drr, start anchor = east, rightsquigarrow, "\cdot"{description, name=A}, "\alpha"]
   \ar[dd,rightsquigarrow, bend left=20]
\\
s
\ar[r, rightsquigarrow]
& s'
\ar[urrr, end anchor = west, rightsquigarrow] 
\ar[drrr, end anchor = west, rightsquigarrow, "\cdot"{description, near start}, "\alpha"' near start,
"\cdot"{description, near end}, "\beta"' near end] 
& & & & &
& & 
t'
\ar[r, rightsquigarrow]
& 
t
\\
& & & & \e' 
   \ar[uu,rightsquigarrow, bend left=20]
 & & \u'' 
   \ar[urr,start anchor = east, rightsquigarrow,"\cdot"{description, name=B}, "\beta"']
   \ar[ll,rightsquigarrow, bend left=20, ""{near start, name=H1}, ""{near end, name=H2}] 
\\
\ar[from=A, to=B, rightsquigarrow]
\end{tikzcd}
\end{minipage}
&
\hspace*{-.8cm}
\begin{minipage}{0.40\textwidth}
\begin{tikzcd}[column sep=3mm,row sep=5mm]
&&&& \alpha \ar[dd, rightsquigarrow]\ar[drr, rightsquigarrow] \\
s \ar[rr, rightsquigarrow, "\cdot" description, "\e^*"] && \u' \ar[urr, rightsquigarrow] \ar[drr, rightsquigarrow, "\cdot" description, "\u''"'] &&&& t'\ar[r, rightsquigarrow] & t\\
&&&& \beta \ar[urr, rightsquigarrow] \\
\end{tikzcd}
\end{minipage}
\vspace*{-.5cm}
\end{tabular}

\noindent
The case where $\alpha \in \fnln {\u''}{t'}$ and $\beta \in \fnln {\u'}{t'}$ is like the diagrams above,
with the positions of $\alpha$ and $\beta$ swapped.

\paragraph{Case~\ref{case:tpt}:}
Since one between $\u'$ and $\u''$ is $\u^*$, 
let $\omega$ be the last node of $s'\leadsto \e'$ that belongs to $\u^*\leadsto t$ (it is not important 
if $\omega$ is on $\fnln {\u'}{t'}$ or in $t'\leadsto t$).
Consider the cycle $C'=\omega\leadsto \e'\leadsto \e^* \leadsto \u^* \leadsto \omega$ 
(depicted in black in the figures below):

\begin{tabular}{cc}
\hspace*{-.4cm}
\begin{minipage}{0.48\textwidth}
\begin{tikzcd}[column sep=2.5mm,row sep=4mm]
& & & \e^* 
   \ar[drr, rightsquigarrow, bend left=40] 
\\
\textcolor{lightgray}s
  \ar[r, lightgray, rightsquigarrow] & \textcolor{lightgray}{s'}
  \ar[urr, end anchor = south west, rightsquigarrow, lightgray] 
  \ar[drr, end anchor = west, rightsquigarrow, lightgray, "\cdot"{black, description, name=W}, "\omega"'{black}] 
&&&& 
|[alias=U]|\u^* 
  \ar[rr,start anchor = east, rightsquigarrow, lightgray, "\cdot"{black, description, name=Z}, "\omega"'{black}] 
  \ar[dll, rightsquigarrow, lightgray, bend left=40]
&&
\textcolor{lightgray}{t'} \ar[r, rightsquigarrow, lightgray]
& \textcolor{lightgray} t
\\
& & & |[alias=E]|\e' \ar[uu, rightsquigarrow, bend left=30]
\\
\ar[from=W, to=E, rightsquigarrow, end anchor=west]
\ar[from=U, to=Z, rightsquigarrow]
\end{tikzcd}
\end{minipage}
&
\begin{minipage}{0.48\textwidth}
\begin{tikzcd}[column sep=2.5mm,row sep=4mm]
& & & \e^* 
   \ar[drr, rightsquigarrow, bend left=40] 
\\
\textcolor{lightgray}s 
  \ar[r, rightsquigarrow, lightgray] 
& \textcolor{lightgray}{s'}
  \ar[urr, end anchor = south west, rightsquigarrow, lightgray] 
  \ar[drr, end anchor = west, rightsquigarrow, lightgray, "\cdot" {black, description, name=W}, "\omega"'{black}] 
&&&& 
 \u^* 
   \ar[r, rightsquigarrow] 
   \ar[dll, rightsquigarrow, lightgray, bend left=40]
&
|[alias=T]| t'
\ar[rr, rightsquigarrow,  lightgray, "\cdot"{black, description, name=Z}, "\omega"{black}]
& & 
\textcolor{lightgray}t
\\
& & & 
|[alias=E]|\e' 
   \ar[uu, rightsquigarrow, bend left=30]
\\
\ar[from=W, to=E, rightsquigarrow, end anchor=west]
\ar[from=T, to=Z, rightsquigarrow]
\end{tikzcd}
\end{minipage}
\vspace*{-.4cm}
\end{tabular}

\noindent
$C'$ is still $s$-minimal, because $\e^*$ belongs to it. Moreover,
since $\omega$ is on a path of minimum length from $\u^*$ to $t$ 
and $\omega\leadsto t$ does not touch $C'$, 
we have that $d_t(C')\leq d(\omega,t)<d(\u^*, t)=d_t(C)$.
\end{proof}


\begin{lemma}
\label{lemma:eueu}
Let $G$ be an $st$-connected multidigraph and $C$ be an $s$-minimal non-splittable cycle of $G$
with at least two entry and two exit nodes. Then either: 
\begin{itemize}
\item $\W$ is $st$-embeddable in $G$; or
\item there exists an $s$-minimal cycle $C'$ such that $d_t(C')<d_t(C)$.
\end{itemize}
\end{lemma}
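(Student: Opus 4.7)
The plan is to mirror the strategy of Lemma \ref{lemma:eu}, using the interleaving of entry and exit nodes forced by non-splittability in place of the neutral hyper-chord. First I observe that since $C$ is non-splittable and has at least two entries and two exits, there must exist entry nodes $\e_1, \e_2$ and exit nodes $\u_1, \u_2$ appearing on $C$ in the cyclic order $\e_1, \u_1, \e_2, \u_2$; otherwise the entries and exits could all be separated by a single cut, yielding splittability. Write $\alpha_1, \alpha_2, \alpha_3, \alpha_4$ for the four arcs of $C$ determined by these four nodes (with $\alpha_1 : \e_1 \leadsto \u_1$, etc.). By relabeling, we may assume that $\e_1 = \e^*$ and that one of $\u_1, \u_2$ is $\u^*$.

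Next, pick entry paths $s \stackrel{p_1}\leadsto \e_1$ and $s \stackrel{p_2}\leadsto \e_2$ of minimum length, letting $s'$ be their last common vertex, and exit paths $\u_1 \stackrel{q_1}\leadsto t$ and $\u_2 \stackrel{q_2}\leadsto t$ of minimum length, letting $t'$ be their first common vertex. By Lemma \ref{fact:minimal}, the prefix $s \leadsto s'$ is disjoint from every exit path. The candidate $st$-embedding of $\W$ takes $\phi(a) = s'$, $\phi(b) = \u_1$, $\phi(c) = \e_2$, $\phi(d) = t'$, with the five internal paths being $s' \leadsto \e_1 \cdot \alpha_1$, $s' \leadsto \e_2$, $\alpha_2$, $\u_1 \leadsto t'$, and $\alpha_3 \cdot \u_2 \leadsto t'$, together with external paths $s \leadsto s'$ and $t' \leadsto t$. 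The three cycle arcs are internally disjoint because $C$ is simple, the two entry sub-paths share only $s'$ by choice, and the two exit sub-paths share only $t'$; entry and exit paths are disjoint from the cycle arcs because they touch $C$ only at their entry/exit endpoints.

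The only obstruction to this being a valid $st$-embedding is an intersection between the sub-path $s' \leadsto \e_i$ (or the prefix $s \leadsto s'$) and one of the exit sub-paths. If no such intersection exists, we are done. Otherwise, we do a case analysis on where the earliest such intersection $\omega$ occurs, exactly parallel to cases (1)--(4) in the proof of Lemma \ref{lemma:eu}: if $\omega$ lies in $\fnln{\u_1}{t'}$ only, in $\fnln{\u_2}{t'}$ only, or in both (with two nodes $\alpha, \beta$), we re-route the embedding through $\omega$ (or $\alpha, \beta$) and obtain a different $st$-embedding of $\W$ whose constituent paths avoid the problematic intersection. If instead the intersection lies in the common tail $\fyly{t'}{t}$, we use the fact that one of $\u_1, \u_2$ is $\u^*$ and we form the cycle $C' = \omega \leadsto \e_i \leadsto \e^* \stackrel{\alpha}\leadsto \u^* \leadsto \omega$, which is $s$-minimal because it still contains $\e^*$, and satisfies $d_t(C') \leq d(\omega, t) < d(\u^*, t) = d_t(C)$ since $\omega$ lies on a shortest path from $\u^*$ to $t$.

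The main obstacle I expect is verifying that the interleaved tuple $\e_1, \u_1, \e_2, \u_2$ can be chosen to include $\e^*$ (and simultaneously arranging that one of the two chosen exits is $\u^*$) in a way that is robust enough for all four sub-cases of the intersection analysis; one may need to argue separately based on which of $\e^*$ and $\u^*$ can be included, since in extreme configurations both optima cannot always sit in the same interleaved quadruple, and in that event the candidate embedding must be set up so that the smaller-cycle argument uses whichever of $\e^*$ or $\u^*$ is preserved.
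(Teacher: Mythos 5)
Your setup matches the paper's: from non-splittability you extract an interleaved quadruple $\e_1,\u_1,\e_2,\u_2$ with $\e_1=\e^*$ and one of the exits equal to $\u^*$, build the same candidate embedding of $\W$ (with the cycle arc $\u_1\leadsto\e_2$ as the middle edge), and note that, by Lemma~\ref{fact:minimal}, the only possible obstruction is an intersection of $s'\leadsto\e_2$ with the exit paths. The gap is in the case analysis, which you import ``exactly parallel'' from Lemma~\ref{lemma:eu}: in the non-splittable configuration the two exits play asymmetric roles, and the genuinely hard case --- the entry path to $\e_2$ first meets the exit paths on the common tail $\fyly{t'}{t}$ and \emph{later} returns to meet $\fnln{\u_2}{t'}$ --- is not covered by your four cases. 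In that configuration your tail-case construction fails: you take $\omega$ to be the \emph{earliest} intersection and form $C'=\omega\leadsto\e_2\leadsto\e^*\leadsto\u^*\leadsto\omega$, but since $\omega$ is only the first intersection, the segment $\omega\leadsto\e_2$ of the entry path may re-cross the exit path used to close the cycle, so $C'$ need not be a simple cycle (the paper's corresponding easy case uses the \emph{last} intersection precisely to keep that segment clean, while its embedding cases use the \emph{first} intersection so that the prefix $s'\leadsto\alpha$ is clean). Moreover, when $\u^*=\u_2$ the arc $\e^*\leadsto\u^*$ passes through $\e_2$, so the cycle as written repeats a vertex; the paper avoids this by routing around the whole cycle to $\u_1$ and out along $\u_1$'s exit path, bounding $d_t(C')$ by $\min\{d(\u_1,t),d(\u_2,t)\}=d_t(C)$. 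The paper resolves the missing configuration with a further sub-analysis (its Case~5): depending on whether the middle portion $\alpha\leadsto\omega$ of the entry path meets $\fnln{\u_1}{t'}$, and whether a connecting path from that crossing to $\fnln{\u_2}{t'}$ avoids $t'\leadsto t$, it either extracts a different embedding of $\W$ (via a crossing path $\beta\leadsto\gamma$) or a smaller $s$-minimal cycle through carefully chosen vertices $t^*$ or $\beta^*$. Without some argument of this kind your proof does not produce either of the lemma's two outcomes in that configuration.

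A secondary remark: the obstacle you single out at the end --- whether the interleaved quadruple can be chosen to contain both $\e^*$ and $\u^*$ --- is not where the difficulty lies. If no such quadruple existed, then on the arc from $\e^*$ to $\u^*$ all entries would precede all exits and on the arc from $\u^*$ to $\e^*$ all exits would precede all entries, so entries and exits would each occupy contiguous arcs and $C$ would be splittable; so (up to the degenerate coincidences the paper also glosses over) the relabeling you assume is legitimate, and the real work is the intersection analysis described above.
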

\begin{proof}
Since $C$ is $s$-minimal, we can fix one entry node to be $\e^*$ 
and since $C$ is not splittable we can 
always find another entry $\e$ and two 
exits $\u',\u''$ in such a way that $C=\e^*\leadsto \u'\leadsto \e \leadsto \u'' \leadsto \u''$,
with $\u' \neq \e$ 
and one between $\u',\u''$ is $\u^*$ (see picture below (left) -- where we assume the same conventions on $s \leadsto s'$ and $t' \leadsto t$ as in Lemma \ref{lemma:eu}). 
If $s' \leadsto \e$ is disjoint from both exit paths $\u'\leadsto t$ and $\u''\leadsto t$, 
we have the $st$-embedding of $\W$ 
depicted below (right): 

\begin{tabular}{cc}
\hspace*{-.5cm}
\begin{minipage}{0.55\textwidth}
\begin{tikzcd}[column sep=5mm,row sep=3mm]
& & \e^* 
   \ar[r,rightsquigarrow, bend left=20] 
 & \u' 
   \ar[dr,rightsquigarrow]
   \ar[ddl,rightsquigarrow, bend left=10]
\\
s \ar[r,rightsquigarrow] & s' \ar[ur,rightsquigarrow] \ar[dr,rightsquigarrow] 
& & & t'
\ar[r, rightsquigarrow]
& t
\\
& & \e 
   \ar[r, rightsquigarrow, bend right=20] 
 & \u'' 
   \ar[ur,rightsquigarrow]
   \ar[uul,rightsquigarrow, crossing over, bend right=10]
\\
\end{tikzcd}
\end{minipage}
~
\hspace*{-.4cm}
\begin{minipage}{0.4\textwidth}
\begin{tikzcd}[column sep=5mm,row sep=4mm]
&& \u' 
\ar[dd, rightsquigarrow]
\ar[dr, rightsquigarrow] \\
s \ar[r,rightsquigarrow] & s'
\ar[ur, rightsquigarrow, "\e^*", "\cdot" description] 
\ar[dr, rightsquigarrow] & & t' \ar[r,rightsquigarrow] & t\\
&& 
\e \ar[ur, rightsquigarrow, "\cdot" description, "\u''"'] \\
\end{tikzcd}
\end{minipage}
\vspace*{-.2cm}
\end{tabular}

\noindent
Otherwise, we need to distinguish several cases, depending on where $s'\leadsto \e$ 
touches $\u'\leadsto t$ and $\u''\leadsto t$. 
Let $\alpha$ (resp. $\omega$) be the first (resp. last) 
node along $s' \leadsto \e$ that belongs also to an exit path. 
We will consider all possible positions of $\alpha$ and $\omega$ on exit paths: 
\begin{enumerate*}
\item $\alpha\in \fnln {\u'}{t'}$; 
\label{case:alphau1}
\item
$\alpha\in \fnln{\u''}{t'}$; 
\label{case:alphau2}
\item
$\omega\in \fnln {\u'}{t'}$; 
\label{case:omegau2} 
\item
$\omega\in \fyly {t'}t$; 
\label{case:omegatpt} 
\item
$\alpha\in \fyly {t'}t$
\label{case:alphatpt} 
and $\omega\in \fnln {\u''}{t'}$.
\label{case:omegau1} 
\end{enumerate*}

\paragraph{Case~\ref{case:alphau1}:} In the net depicted below (left),
we have the $st$-embedding of $\W$ depicted below (right), 
independently of other intersections between $\alpha \leadsto \e$ and exit paths:
   
\begin{tabular}{cc}
\hspace*{-.6cm}
\begin{minipage}{0.55\textwidth}
\begin{tikzcd}[column sep=5mm,row sep=3mm]
& & \e^* 
   \ar[r,rightsquigarrow, bend left=20] 
 & \u' 
   \ar[dr,rightsquigarrow, "\cdot"{description}, "\alpha"]
   \ar[ddl,rightsquigarrow, bend left=10]
\\
s \ar[r, rightsquigarrow]
& s' \ar[ur,rightsquigarrow] 
   \ar[dr,rightsquigarrow, "\cdot"{description}, "\alpha"'] 
& & & t'
\ar[r, rightsquigarrow]
& t
\\
& & \e 
   \ar[r,rightsquigarrow, bend right=20] 
 & \u'' 
   \ar[ur,rightsquigarrow]
   \ar[uul,rightsquigarrow, crossing over, bend right=10]
\\
\end{tikzcd}
\end{minipage}
~
\hspace*{-.6cm}
\begin{minipage}{0.4\textwidth}
\begin{tikzcd}[column sep=4mm,row sep=4mm]
&&& \u' 
\ar[dd, rightsquigarrow]
\ar[drr, rightsquigarrow, "\cdot"{description, near start}, "\e" near start,
       "\cdot"{description, near end}, "\u''" near end] \\
s \ar[r, rightsquigarrow] & s' 
\ar[urr, rightsquigarrow, "\cdot" description, "\e^*"] 
\ar[drr, rightsquigarrow] &&&&  t' \ar[r, rightsquigarrow] & t\\
&&&
\alpha 
\ar[urr, rightsquigarrow] \\
\end{tikzcd}
\end{minipage}
\end{tabular}

\paragraph{Case~\ref{case:alphau2}:}
In the net depicted below (left), we have the $st$-embedding 
of $\W$ depicted below (right), independently of other intersections between $\alpha\leadsto \e$ and exit paths:

\begin{tabular}{cc}
\hspace*{-.6cm}
\begin{minipage}{0.55\textwidth}
\begin{tikzcd}[column sep=5mm,row sep=3mm]
& & \e^* 
   \ar[r,rightsquigarrow, bend left=20] 
 & \u' 
   \ar[dr,rightsquigarrow]
   \ar[ddl,rightsquigarrow, bend left=10]
\\
s \ar[r, rightsquigarrow]& s' \ar[ur,rightsquigarrow] 
   \ar[dr,rightsquigarrow, "\cdot"{description}, "\alpha"'] 
& & & t'
\ar[r, rightsquigarrow]
& t
\\
& & \e 
   \ar[r,rightsquigarrow, bend right=20] 
 & \u'' 
   \ar[ur, rightsquigarrow, "\cdot"{description}, "\alpha"']
   \ar[uul,rightsquigarrow, crossing over, bend right=10]
\\
\end{tikzcd}
\end{minipage}
~
\hspace*{-.6cm}
\begin{minipage}{0.4\textwidth}
\begin{tikzcd}[column sep=6mm,row sep=4mm]
&& \u' 
\ar[dd, rightsquigarrow, "\cdot"{description, near start}, "\e" near start,
       "\cdot"{description, near end}, "\u''" near end]
\ar[dr, rightsquigarrow] \\
s \ar[r, rightsquigarrow]
& s' 
\ar[ur, rightsquigarrow, "\cdot" description, "\e^*"] 
\ar[dr, rightsquigarrow] & &  t' \ar[r, rightsquigarrow]
& t
\\
&& 
\alpha 
\ar[ur, rightsquigarrow] \\
\end{tikzcd}
\end{minipage}
\end{tabular}

\paragraph{Case~\ref{case:omegau2}:}
In the net depicted below (left), we have the $st$-embedding 
of $\W$ depicted below (right), independently of other intersections between $s'\leadsto \omega$ and exit paths: 

\begin{tabular}{cc}
\hspace*{-.6cm}
\begin{minipage}{0.55\textwidth}
\begin{tikzcd}[column sep=5mm,row sep=3mm]
& & \e^* 
   \ar[r,rightsquigarrow, bend left=20] 
 & \u' 
   \ar[dr,rightsquigarrow, "\cdot"{description}, "\omega"]
   \ar[ddl,rightsquigarrow, bend left=10]
\\
s \ar[r, rightsquigarrow]
&
s' \ar[ur,rightsquigarrow] 
   \ar[dr,rightsquigarrow, "\cdot"{description}, "\omega"'] 
& & & t'
\ar[r, rightsquigarrow]
& t
\\
& & \e 
   \ar[r,rightsquigarrow, bend right=20] 
 & \u'' 
   \ar[ur, rightsquigarrow]
   \ar[uul,rightsquigarrow, crossing over, bend right=10]
\\
\end{tikzcd}
\end{minipage}
~
\hspace*{-.6cm}
\begin{minipage}{0.4\textwidth}
\begin{tikzcd}[column sep=5mm,row sep=4mm]
& & \omega 
\ar[dd, rightsquigarrow]
\ar[dr, rightsquigarrow] \\
s 
\ar[r, rightsquigarrow, "\cdot"{description}, "\e^*"]
&
\u' 
\ar[ur, rightsquigarrow] 
\ar[dr, rightsquigarrow] & &  t' \ar[r, rightsquigarrow]
& t
\\
& & 
\e 
\ar[ur, rightsquigarrow, "\cdot" description, "\u''"'] \\
\end{tikzcd}
\end{minipage}
\end{tabular}

\paragraph{Case~\ref{case:omegatpt}:}
In the net depicted below, we can consider the cycle $C'=\omega\leadsto\e\leadsto \u''\leadsto \e^*
\leadsto \u'\leadsto t'\leadsto \omega$ (in black): 

\begin{center}
\begin{minipage}{0.5\textwidth}
\begin{tikzcd}[column sep=5mm,row sep=3mm]
& & \e^* 
   \ar[r, rightsquigarrow, bend left=20] 
 & \u' 
   \ar[dr, rightsquigarrow]
   \ar[ddl,rightsquigarrow, lightgray, bend left=10]
\\
\textcolor{lightgray}s  \ar[r, rightsquigarrow, lightgray]
& |[alias=S]| \textcolor{lightgray}{s'} \ar[ur,rightsquigarrow, lightgray] 
   \ar[dr,rightsquigarrow, lightgray, end anchor=west, "\cdot"{black, description,name=O}, "\omega"'{black}] 
& & & |[alias=P]|t'
\ar[r, rightsquigarrow, lightgray, "\cdot"{black, description, name=Q}, "\omega"{black}]
& |[alias=T]| \textcolor{lightgray}t
\\
& & |[alias=E]|\e 
   \ar[r, rightsquigarrow, bend right=20] 
 & \u'' 
   \ar[ur, rightsquigarrow, lightgray]
   \ar[uul, rightsquigarrow, crossing over, bend right=10]
\\
\ar[from=O, to=E, rightsquigarrow, end anchor=west]
\ar[from=P, to=Q, rightsquigarrow]
\end{tikzcd}
\end{minipage}
\vspace*{-.6cm}
\end{center}

\noindent
Since $C'$ contains $\e^*$, it is still an $s$-minimal cycle; moreover, $d_t(C')\leq d(\omega, t) < \min\{d(\u',t),d(\u'',t)\}=d_t(C)$, since $\u'\leadsto t$ and 
$\u''\leadsto t$ are paths of minimum length and $\omega\leadsto t$ does not touch $C'$.

\paragraph{Case~\ref{case:alphatpt}:} 
If we are not in one of cases \ref{case:alphau1}--\ref{case:omegatpt}, then the situation is depicted below:

\begin{center}
\begin{tikzcd}[column sep=8mm,row sep=3mm]
& & \e^* 
   \ar[r,rightsquigarrow, bend left=20] 
 & \u' 
   \ar[dr,rightsquigarrow]
   \ar[ddl,rightsquigarrow, bend left=10]
\\
s \ar[r,rightsquigarrow] &
s' \ar[ur,rightsquigarrow] 
   \ar[dr,rightsquigarrow, "\cdot"{description, near end}, "\omega"' near end, 
  						   "\cdot"{description, near start}, "\alpha"' near start ] 
& & & t'
\ar[r,rightsquigarrow, "\cdot"{description}, "\alpha"]
& t
\\
& & \e 
   \ar[r,rightsquigarrow, bend right=20] 
 & \u'' 
   \ar[ur, rightsquigarrow, "\cdot"{description, name=W}, "\omega"']
   \ar[uul,rightsquigarrow, crossing over, bend right=10]
\\
\end{tikzcd}
\vspace*{-.4cm}
\end{center}

\noindent
If $\alpha\leadsto\omega$ does not touch $\fnln {\u'}{t'}$,
let $t^*$ be the first node  along $t'\leadsto t$
that appears in $s'\leadsto \e$ (of course, $t^*$ can be $\alpha$ or some node between $\alpha$ and $\omega$ 
along $s' \leadsto \e$). 
We can then consider the cycle $C'=t^*\leadsto\omega\leadsto\e\leadsto \u''\leadsto \e^*
\leadsto \u'\leadsto t'\leadsto t^*$ (that has the same shape of the cycle considered in Case~\ref{case:omegatpt}). 
$C'$ is $s$-minimal (since it contains $\e^*$) and 
$d_t(C')\leq d(t^*,t) < \min\{d(u',t),d(u'',t)\} = d_t(C)$.

Otherwise, $\alpha\leadsto\omega$ touches $\fnln {\u'}{t'}$
in a set of vertices $B=\{\beta_1, \ldots, \beta_k\}$ and
we must have a path from some $\beta \in B$ to some $\gamma$ in $\fnln {\u''}{t'}$ that has no further
intersections with $\u'\leadsto t'$ and $\u''\leadsto t'$. 
If this path does not touch $t'\leadsto t$ (see picture below, left), 
we can 
find an $st$-embedding of $\W$ (see picture below, right).

\begin{tabular}{cc}
\hspace*{-.7cm}
\begin{minipage}{0.5\textwidth}
\begin{tikzcd}[column sep=4mm,row sep=3mm]
&&& \e^* 
   \ar[r,rightsquigarrow, bend left=20] 
 & \u' 
   \ar[drr,rightsquigarrow, "\cdot"{description, name=B}, "\beta"]
   \ar[ddl,rightsquigarrow, bend left=10]
\\
s \ar[r, rightsquigarrow]
& s'
\ar[urr,rightsquigarrow] 
   \ar[drr,rightsquigarrow, "\cdot"{description, near end}, "\gamma"' near end, 
  						   "\cdot"{description, near start}, "\beta"' near start ] 
&&&&& t'
\ar[r, rightsquigarrow]
& t
\\
&&& \e 
   \ar[r,rightsquigarrow, bend right=20] 
 & \u'' 
   \ar[urr, rightsquigarrow, "\cdot"{description, name=W}, "\gamma"']
   \ar[uul,rightsquigarrow, crossing over, bend right=10]
\\
\ar[from=B, to=W, rightsquigarrow]
\end{tikzcd}
\end{minipage}
~
\begin{minipage}{0.4\textwidth}
\begin{tikzcd}[column sep=4mm,row sep=4mm]
&&&& \beta 
\ar[dd, rightsquigarrow]
\ar[drr, rightsquigarrow] \\
s
\ar[rr, rightsquigarrow, "\cdot"{description}, "\e^*"]
&&
\u' 
\ar[urr, rightsquigarrow] 
\ar[drr, rightsquigarrow, "\cdot"{description, near start}, "\e"' near start, "\cdot"{description, near end}, "\u''"' near end]
&&&&  t'\ar[r, rightsquigarrow]
& t
\\
&&&&
\gamma 
\ar[urr, rightsquigarrow] \\
\end{tikzcd}
\end{minipage}
\vspace*{-.2cm}
\end{tabular}

Otherwise, let $\beta^*$ be the first vertex of $B$ along $\fnln {\u'}{t'}$ and 
$t^*$ the first vertex along  $\fyly {t'}{t}$ after $\beta^*$ 
in $\fnln {s'} \e$:

\begin{center}
\begin{tikzcd}[column sep=8mm,row sep=3mm]
& & \e^* 
   \ar[r,rightsquigarrow, bend left=20] 
 & |[alias=UP]|\u' 
   \ar[dr,rightsquigarrow, lightgray, "\cdot"{black, description, name=B}, "\beta^*"{black}]
   \ar[ddl,rightsquigarrow, lightgray, bend left=10]
\\
\textcolor{lightgray}s \ar[r,rightsquigarrow, lightgray] &
\textcolor{lightgray}{s'} \ar[ur,rightsquigarrow,lightgray] 
   \ar[dr,rightsquigarrow, end anchor = west, lightgray,  
        "\cdot"{black, description, near end, name=TP}, "t^*"'{black, near end}, 
  		"\cdot"{black, description, near start, name=BB}, "\beta^*"'{black, near start} ] 
& & & \textcolor{lightgray}{t'}
\ar[r,rightsquigarrow, lightgray, "\cdot"{lightgray, description, name=T}, "\ t^*"{lightgray}]
& \textcolor{lightgray}t
\\
& & |[alias=E]|\e 
   \ar[r,rightsquigarrow, bend right=20] 
 & \u'' 
   \ar[ur, rightsquigarrow, lightgray]
   \ar[uul,rightsquigarrow, crossing over, bend right=10]
\\
\ar[from=BB, to=E, rightsquigarrow, end anchor=west]
\ar[from=UP, to=B, rightsquigarrow]
\end{tikzcd}
\vspace*{-.6cm}
\end{center}

\noindent
Then, the path $C'=\beta^* \leadsto t^* \leadsto\e\leadsto \u''\leadsto \e^*
\leadsto \u'\leadsto \beta^*$ (in black above) is an $s$-minimal cycle 
(since it contains $\e^*$) such that $d_t(C')\leq d(t^*,t) < \min\{d(u',t),d(u'',t)\} = d_t(C)$.
\end{proof}

\subsection{Characterising Vulnerability}
\label{sec:char}

We are now ready for giving the characterisation of vulnerability for general $st$-multi-digraphs.

\begin{theorem}
\label{thm:vuln}
Let $(G,s,t)$ be an $st$-connected net. 
$G$ is vulnerable if and only if there exists an $st$-embedding of $\W$ into $G$. 
\end{theorem}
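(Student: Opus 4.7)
The plan is to treat the two directions separately. For $(\Leftarrow)$, given an $st$-embedding $(\phi,\psi)$ of $\W$ into $G$ together with node-disjoint entry/exit paths from $s$ to $\phi(a)$ and from $\phi(d)$ to $t$, I would construct a latency function $l$ on $G$ that assigns the Wheatstone latencies ($x$, $1$, or $0$, possibly divided evenly among the edges of a path if $\psi$ maps a single edge of $\W$ to a path of length greater than one) to the edges appearing in the image of the embedding, and a sufficiently large constant latency to every remaining edge. Constants and linear functions are continuous and non-decreasing, so the assignment is legitimate. With $r=1$, a routine calculation shows that the Wardrop flow on $(G,r,l)$ concentrates on the embedded copy of the zig-zag path of $\W$; removing the image of the ``zero-cost'' middle edge $b\to c$ then reproduces the classical Braess drop, witnessing vulnerability of $G$.

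For $(\Rightarrow)$, suppose $G$ is vulnerable. By Proposition~\ref{lemma:equivalentGraphs}, I may replace $G$ by its maximal irredundant subnet $G'=\fun{MIS}(G,s,t)$: it remains vulnerable and has the same simple $st$-paths, so an $st$-embedding of $\W$ in $G'$ lifts to one in $G$. If $G'$ is acyclic, Fact~\ref{thm:vuln-ser-par} concludes immediately. Otherwise, I would iterate the following step over $s$-minimal cycles of $G'$: pick an $s$-minimal cycle $C$. Because $G'$ is irredundant, Lemma~\ref{fact:noteu} forces $C$ to have at least two entry and two exit nodes; hence either Lemma~\ref{lemma:eu} (if $C$ is splittable) or Lemma~\ref{lemma:eueu} (otherwise) applies.

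In both cycle lemmas the ``redundant edge'' alternative is vacuous in the irredundant net $G'$, so each iteration either already produces the desired $st$-embedding of $\W$ and terminates, or returns a strictly smaller $s$-minimal cycle $C'$ with $d_t(C')<d_t(C)$ on which the argument is repeated; the freshly produced $C'$ still lives in $G'$ and hence also enjoys at least two entry and two exit nodes, so the cycle lemmas remain applicable. Since $d_t$ is a non-negative integer-valued measure, this well-founded recursion must terminate, and it can only terminate by exhibiting an $st$-embedding of $\W$. The genuinely hard part, namely the combinatorial case analysis of how entry and exit paths can interact with a cycle, has already been absorbed into Section~\ref{sec:remCycles}; once Lemmas~\ref{lemma:eu} and~\ref{lemma:eueu} are in hand, the main theorem reduces to the bookkeeping described above, with the only minor obligation being to verify that irredundancy is preserved when one keeps working inside the same $G'$ throughout the iteration.
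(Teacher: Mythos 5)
Your proposal is correct and it rests on exactly the machinery the paper uses: for the `if' direction, an explicit Wheatstone-style latency assignment on the image of the embedding (the paper puts the whole weight $x$ or $1$ on the first edge of each image path and latency $\infty$ on all remaining edges, where you spread the weight and use a sufficiently large constant; both work and yours is if anything cleaner, since latencies are supposed to map into $\mathbb{R}^+$), and for the `only if' direction Fact~\ref{thm:vuln-ser-par}, Proposition~\ref{lemma:equivalentGraphs} and the cycle Lemmas~\ref{fact:noteu}, \ref{lemma:eu}, \ref{lemma:eueu}. The genuine difference is organizational. The paper argues by contradiction on $G$ itself, interleaving the cycle analysis with the deletion of whatever redundant edges it discovers, and uses the $d_t$-descent to argue that deletions must keep happening until an acyclic subgraph with the same simple $st$-paths is reached, at which point Proposition~\ref{lemma:equivalentGraphs} and Fact~\ref{thm:vuln-ser-par} apply. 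You instead pass once and for all to the maximal irredundant subnet $\fun{MIS}(G,s,t)$, observe that irredundancy makes the redundant-edge alternatives of Lemmas~\ref{fact:noteu} and~\ref{lemma:eu} vacuous (and guarantees at least two entries and two exits on every cycle), and then run a clean well-founded descent on $d_t$ inside a graph that never changes, which can only terminate by exhibiting an $st$-embedding of $\W$. As a proof this is slightly tidier: you avoid the paper's bookkeeping about re-establishing $st$-connectedness and $s$-minimality after each deletion and about why the process eventually yields an acyclic graph. Two caveats: to invoke Proposition~\ref{lemma:equivalentGraphs} literally (it assumes the same vertex set) you should first delete only the redundant edges and then drop the resulting isolated vertices, which affects neither cycles, flows, nor embeddings; and your route is non-constructive in a way the paper's deliberately is not, since computing $\fun{MIS}$ is shown in Section~3 to be NP-hard, so your argument proves the theorem but would not by itself yield the polynomial-time procedure of Section~5, which implements the paper's interleaved version of the `only if' direction.
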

\begin{proof}
For the `if' part,
we know that $G$ admits a subgraph of the form
\begin{center}
\begin{tikzcd}[column sep=.8cm,row sep=.4cm]
 & & u \ar[dd, rightsquigarrow]\ar[dr, rightsquigarrow]\\
s \ar[r, rightsquigarrow] & s' \ar[ur, rightsquigarrow]\ar[dr, rightsquigarrow] && t'\ \ar[r, rightsquigarrow] & t\\
  & & v \ar[ur, rightsquigarrow] & \\
\end{tikzcd}
\vspace*{-.5cm}
\end{center}
where all paths are node-disjoint, except for the extremal nodes. 
Let us consider the latency assignment $l$ that assigns:
\begin{itemize}
\item 0 to all edges in $u \leadsto v$, in $s \leadsto s'$ and in $t' \leadsto t$;
\item $x$ to the first edge in $s' \leadsto u$ and in $v \leadsto t'$ and 0 to all the remaining edges in those paths;
\item 1 to the first edge in $s' \leadsto v$ and in $u \leadsto t'$ and 0 to all the remaining edges in those paths;
\item $\infty$ to all the remaining edges. 
\end{itemize}
In this way, $G$ behaves like the Wheatstone net; thus, it is vulnerable.

For the `only if' part, 
if $G$ is not cyclic, the statement follows by Fact~\ref{thm:vuln-ser-par}.
If $G$ contains cycles, by contradiction, let us suppose that there exists no $st$-embedding of $\W$ into it. 
Then, take an $s$-minimal cycle and, according to its kind, apply one of
Lemma~\ref{fact:noteu}, \ref{lemma:eu} or \ref{lemma:eueu}. 
Such results can lead either to deleting redundant edges or to change the cycle.
However, since the new cycle is strictly closer to $t$ than the old one, we must eventually
delete some redundant edge(s). By repeating this reasoning, 
we will eventually transform $G$ into an acyclic subgraph $G'$ such that 
$SP(G')=SP(G)$. By Proposition \ref{lemma:equivalentGraphs}, since $G$ is vulnerable, $G'$ is vulnerable too; thus, since $G'$ is acyclic, there exists an $st$-embedding of $\W$ into $G'$ (by Fact~\ref{thm:vuln-ser-par}) that 
is also an $st$-embedding of $\W$ into $G$. Contradiction.
\end{proof}

\section{Consequences of the Characterisation}

\subsection{On Polynomially Checking Vulnerability}
\label{sec:polyAlg}

Stemming from the characterisation of vulnerable nets provided by Theorem \ref{thm:vuln},  
we define function \fun{isVulnerable} (its pseudocode is given in Algorithm \ref{alg})
that detects if $\W$ 
is $st$-embeddable in a given net $(G,s,t)$. It essentially implements the procedure hinted to
in the `only if' part of the proof of Theorem \ref{thm:vuln}.

\subsubsection{Algorithm Description}
The outer {\bf while} loop of function  \fun{isVulnerable} (lines \ref{alg:whileStart}--\ref{alg:whileEnd}) 
runs until $G$ is acyclic 
or an $st$-embedding of $\W$ is found (line \ref{alg:returnW}).  
This loop starts by computing an $s$-minimal cycle $C$ of $G$ (line \ref{alg:sMinimal}). 

The inner {\bf repeat} loop (lines \ref{alg:repeatStart}--\ref{alg:repeatEnd}) 
runs until cycle analysis in its body 
either finds at least a redundant edge to delete or finds an $st$--embedding of $\W$. 
This loop starts by computing the set of entry (line \ref{alg:entryNodes}) 
and exit (line \ref{alg:exitNodes})  
nodes of $C$ and then it determines which kind of cycle $C$ is (lines \ref{alg:oneEUStart}--\ref{alg:endif}). 

If $C$ has just one entry node (line \ref{alg:oneEntry}) 
or just one exit node (line \ref{alg:oneExit}), 
according to Lemma \ref{fact:noteu},  
a redundant edge is found and it is stored in the set $D$ of edges that will be deleted from $G$. 

If $C$ is a splittable cycle, according to Lemma \ref{lemma:eu}, function \fun{splittableAnalysis} (line \ref{alg:splittableAnalysis})  
returns $\langle \fun{true}, \varnothing, C\rangle$ if 
it finds an $st$-embedding of $\W$ in $G$, $\langle \fun{false}, D, C \rangle$ if it identifies a set $D\not=\varnothing$ of redundant edges to be deleted from $G$, and $\langle \fun{false}, \varnothing, C' \rangle$ otherwise, where $C'$ is a cycle closer to $t$ than $C$. 

If $C$ is not splittable, according to Lemma \ref{lemma:eueu}, function \fun{nonSplittableAnalysis} (line \ref{alg:nonSplittableAnalysis}) returns $\langle \fun{true}, C \rangle$ if it finds a $st$-embedding of $\W$ and 
$\langle \fun{false}, C'\rangle$ otherwise, where $C'$ is a cycle closer to $t$ than $C$.

At the beginning of the function (line \ref{alg:makeST1}) and after each edge deletion (line \ref{alg:makeST2}), 
we invoke function \fun{makeSTConnected} in order to ensure the invariant that $G$ is an $st$-connected net. 
 
At the end, if the {\bf while} loop terminates because $G$ is acyclic, we just verify if $G$ is two-terminal 
series--parallel or not (line \ref{alg:ttsp}).

\begin{algorithm}[t]
  \caption[vulnerable or not vulnerable]
  {Checking Vulnerablity}
  \label{alg}
  \begin{algorithmic}[1]
    \REQUIRE
    {A directed multigraph $G = (V,E)$, $s$ is the source and $t$ is the target}
    \ENSURE {\fun{isVulnerable}$(G,s,t)$} 

    \STATE $G$ $=$ \fun{makeSTConnected}($G,s,t$)
    \label{alg:makeST1}
    \WHILE{$G$ is cyclic}
    \label{alg:whileStart}
      \STATE $\langle C,\e^*\rangle$ $=$ \fun{s-minimalCycle}($G,s,t$)
      \label{alg:sMinimal}
      \REPEAT
          \label{alg:repeatStart}
      \STATE $C'=C$; $Vuln=\fun{false}$
      \STATE $En$ $=$ \fun{entryNodes}($G,C,s$)
      \label{alg:entryNodes}
      \STATE $Ex$ $=$ \fun{exitNodes}($G,C,t$)
      \label{alg:exitNodes}
      \IF{$En=\{\e\}$} 
      \label{alg:oneEUStart}
         \STATE $D=\{\text{the edge of $C$ that enters into }\e\}$
         \label{alg:oneEntry}
      \ELSIF{$Ex=\{\u\}$}   
         \STATE $D=\{\text{the edge of $C$ that exits from }\u\}$
         \label{alg:oneExit}
	  \label{alg:oneEUEnd}
      \ELSIF{\fun{isSplittable}($C, En, Ex,G$)} 
      \label{alg:isSplittable}  
         \STATE $\langle Vuln, C, D\rangle$ $=$ \fun{splittableAnalysis}($C,En,Ex,G$) 
         \label{alg:splittableAnalysis}
      \ELSE 
      	 \STATE $\langle Vuln, C \rangle$ $=$ \fun{nonSplittableAnalysis}($C,En,Ex,G$) 
	  \label{alg:nonSplittableAnalysis}  
	  \label{alg:endif}  
	  \ENDIF
	  \UNTIL{$C\not=C'$}    
	            \label{alg:repeatEnd}
	  \IF{$Vuln$}
	     \STATE {\bf return} $true$
	     \label{alg:returnW}
	  \ELSE
		 \STATE $G$ $=$ \fun{makeSTConnected}($(V, E \setminus D), s, t$)
    	\label{alg:whileEnd}
	    \label{alg:makeST2}
	  \ENDIF	 
    \ENDWHILE 
    \STATE {{\bf return} $\neg${\sc TTSP($G$)}}
    \label{alg:ttsp}

\end{algorithmic}
\end{algorithm}  

\subsubsection{Complexity Analysis} 

We start by describing how functions \fun{entryNodes}, \fun{exitNodes}, \fun{splittableAnalysis} and \fun{nonSplittableAnalysis} 
can be implemented and by analysing their complexity. Then, we analyse the overall complexity of function 
\fun{isVulnerable}. 

As it should be clear from the proofs of 
Lemma~\ref{lemma:eu} and~\ref{lemma:eueu}, 
we only need to solve shortest path problems, reachability problems and to detect intersections between paths.
Moreover, since we are dealing with $st$-connected multigraphs, the number of
edges $|E|$ is at least $|V|-1$. 

In what follows, to make notation lighter, given $X \subseteq V$,
we denote by $in(X)$ (resp, $out(X)$) the set of edges $\{in(x) \ :\ x \in X\}$ (resp, $\{out(x) \ :\ x \in X\}$).

\begin{fact}
\label{lemma:entryExitNodes}
Functions \fun{entryNodes} and \fun{exitNodes} cost $O(|E|)$. Furthermore, 
they can associate: (1) a minimum length entry/exit path to every entry/exit node;
and, (2) the minimum distance from $s$ and from $t$ to every node in such paths.
\end{fact}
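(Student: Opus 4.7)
The plan is to implement \fun{entryNodes} (resp. \fun{exitNodes}) by a modified breadth-first search from $s$ (resp. a reverse BFS from $t$) where cycle nodes are treated as non-expandable sinks. By definition, $u\in C$ is an entry node if and only if there is a simple path $s\leadsto u$ whose strict prefix does not touch $C$; equivalently, such a path exists in the subgraph obtained by removing $out(C\setminus\{u\})$. Running a BFS from $s$ that, upon dequeuing a node $v\in C$, records $v$ as an entry node but does not relax its outgoing edges, therefore discovers exactly the set of entry nodes.

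Since BFS visits every vertex and every edge at most a constant number of times, the cost is $O(|V|+|E|)$; as $G$ is $st$-connected, every vertex lies on some $st$-path, so $|V|\le |E|+1$ and the cost reduces to $O(|E|)$. The same reasoning, applied to the reverse graph with source $t$, yields \fun{exitNodes} in $O(|E|)$.

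For (1), during the BFS I would store, for each visited vertex $v$, a pointer to the predecessor through which $v$ was first discovered. Reconstructing the entry path associated to an entry node $\e$ by walking back along these pointers from $\e$ to $s$ produces a path of minimum length (standard property of BFS), and it touches $C$ only in $\e$ because no cycle node was ever expanded. The analogous reverse-BFS tree from $t$ furnishes a minimum-length exit path for every exit node $\u$.

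For (2), BFS labels each dequeued vertex $v$ with its distance $d(s,v)$; this label is available for every vertex on every minimum-length entry path (since all such vertices are discovered before reaching the corresponding entry node), and symmetrically for $d(v,t)$ on exit paths. Thus both pieces of information requested by the statement are produced as byproducts of the same linear-time BFS, with no extra asymptotic cost. The only subtle point — and the one most likely to require care — is the invariant that the BFS neither expands a cycle node nor crosses one internally, which must be maintained explicitly when implementing the expansion step; but this is a constant-time check per dequeue and does not affect the $O(|E|)$ bound.
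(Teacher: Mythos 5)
Your proof is correct and matches the paper's argument: the paper likewise runs a BFS from $s$ in $(V, E\setminus out(C))$ (equivalent to your ``do not expand cycle nodes'' rule) and a backwards BFS from $t$ in $(V, E\setminus in(C))$, reading off entry/exit nodes, minimum-length paths from the BFS tree, and distances from the BFS levels, all in $O(|E|)$. No substantive difference.
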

\begin{proof}
Function \fun{entryNodes} 
can be implemented via a BFS from $s$ in $(V , E \setminus out(C))$;
entry nodes of $C$ are those with finite distance from $s$. 
Dually, \fun{exitNodes} computes exit nodes by a backwards BFS from $t$
in $(V , E \setminus in(C))$.
By construction, each BFS builds a minimum spanning tree that can be used to
find a minimum length path for every
visited node; the minimum distance is the height of the node in the tree. This is all done in $O(|E|)$.
\end{proof}

\begin{remark}
\label{rem:oV}
We will store the four entry/exit paths needed for the cycle analysis as an array of nodes 
indexed by the distance of the node from $s$ (for entry paths) or $t$ (for exit paths). 
These can be efficiently built from the spanning trees produced by the BFSs of Fact \ref{lemma:entryExitNodes},
In this way, intersections between entry/exit paths within 
functions \fun{splittableAnalysis} and \fun{nonSplittableAnalysis}
can be computed in $O(|V|)$.
E.g., if we need to find the first (resp. last) intersection along an entry path $p$
with an exit path $q$, it suffices to scan from left to right (resp., from right to left)
the array storing $p$: since we are dealing with shortest paths, 
if the $i$-th node of $p$ has distance $k$ from $t$, 
it can only occur in the $k$-th position of the array representing $q$. A dual reasoning is
needed if $p$ is an exit and $q$ is an entry path.
\end{remark}

\begin{lemma}
\label{lemma:saComplexity}
Function \fun{splittableAnalysis} costs $O(|E|)$.
\end{lemma}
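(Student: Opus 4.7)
The plan is to show that every step in the proof of Lemma~\ref{lemma:eu} can be realised by a constant number of linear-time graph operations, so that \fun{splittableAnalysis} runs in $O(|E|)$ overall.

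First, I would reuse the entry and exit sets already produced by \fun{entryNodes} and \fun{exitNodes}; then, in a single scan of $C$, I would identify the entry region $\I$, the exit region $\O$, the neutral region $\N$, and the splitter (edge or vertex, depending on whether $\u_1 = \last{\e}$) in $O(|V|)$. I would also retrieve the minimum-length entry path to $\e^*$ and exit path from $\u^*$ from the BFS trees provided by Fact~\ref{lemma:entryExitNodes}, and store them as distance-indexed arrays as prescribed by Remark~\ref{rem:oV}.

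Next, I would test for the existence of a neutral hyper-chord from $\I$ to $\O$. The plan is to run a single BFS from a virtual super-source linked to every vertex of $\I$, allowed to traverse any edge of $G$ but forbidden to transit internally through a node of $\I \cup \O$; a node of $\O$ is reachable iff a neutral hyper-chord exists, and the search tree yields one when it does. If no such hyper-chord exists, two further restricted BFSs -- forwards from $\I$ avoiding $\O$, and backwards from $\O$ avoiding $\I$ -- locate $f_{\I\N}$ and $\ell_{\N\O}$, and by Lemma~\ref{lemmaRedEdges} the edges of $C$ strictly between them form the redundant set $D$ to be returned. Each BFS is $O(|E|)$.

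If instead a neutral hyper-chord $w \leadsto z$ is produced, I would fall into the case analysis in the proof of Lemma~\ref{lemma:eu}, discriminating according to where the minimum-length entry paths $s \leadsto \e'$ and $s \leadsto \e''$ meet the exit paths $\u' \leadsto t$ and $\u'' \leadsto t$. By Remark~\ref{rem:oV}, each such intersection query, and the identification of the relevant vertex $\alpha$, $\beta$ or $\omega$, can be answered in $O(|V|)$ by scanning the distance-indexed arrays; once the case is settled, building the $st$-embedding of $\W$ or the smaller cycle $C'$ reduces to concatenating already-computed path fragments in $O(|V|)$. The main obstacle, in my view, is the hyper-chord search itself: one has to set up the auxiliary graph so that a source-to-target path in it corresponds precisely to a neutral hyper-chord -- guaranteeing in particular that at least one chordal segment is used and that only cycle nodes in $\N$ appear internally. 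Once this is done correctly, every remaining step is a routine BFS, array scan, or path concatenation costing $O(|E|)$ or $O(|V|)$, yielding the claimed $O(|E|)$ bound.
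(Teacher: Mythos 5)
Your proposal is correct and follows essentially the same route as the paper's proof: compute $\I$, $\O$, $\N$ and the splitter by a linear scan of $C$, detect a neutral hyper-chord from $\I$ to $\O$ with a single $O(|E|)$ BFS (the paper runs it backwards from $\O$ over a restricted edge set, you run it forwards from a super-source over $\I$ forbidding internal $\I\cup\O$ nodes --- an inessential difference), use two further restricted BFSs to locate $f_{\I\N}$ and $\ell_{\N\O}$ and output the redundant edges when no hyper-chord exists, and otherwise settle the case analysis of Lemma~\ref{lemma:eu} in $O(|V|)$ via the distance-indexed arrays of Remark~\ref{rem:oV}, for $O(|E|)$ in total. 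The ``obstacle'' you flag (ensuring the BFS encodes genuine neutral hyper-chords, i.e.\ at least one chordal segment and only $\N$-nodes of $C$ internally, which in particular means excluding the splitter) is treated in the paper only implicitly through its choice of removed edges, so your account is no less complete than the published one.
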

\begin{proof}
This function implements the case analysis described in the proof of Lemma \ref{lemma:eu}. 
It starts with calculating entry ($\I$), exit ($\O$), and neutral ($\N$) regions of $C$;
these are computed just by a scansion of $C$ in $O(|V|)$. 

The next step is to search for a neutral hyper-chord from $\I$ to $\O$. 
This can be done by a backwards BFS in $(V, E \setminus (out(\I) \cup in(\O)))$ 
that starts from the set of nodes $\O$ and costs $O(|E|)$. 
This search computes the set $W\subseteq\I$ of nodes that are the source of 
a neutral hyper-chord from $\I$ to $\O$. 

If no such hyper-chord exists ($W=\varnothing$), we have to compute
$f_{\I\N}$, $\ell_{\N\O}$, and all edges of $C$ between $\ell_{\N\O}$ and $f_{\I\N}$
(that are redundant). This can be done by two BFSs (the first starts from all $\I$
and is run in $(V, E \setminus (in(\I) \cup in(\O)))$, the second starts from all $\N$
and is run in $(V, E \setminus (in(\I) \cup out(\O)))$), followed by a linear scansion of $C$. 
The overall cost is again  $O(|E|)$.

Otherwise, we can choose any element of $W\not=\varnothing$ as $w$. 
However, to speed-up convergence of the algorithm, 
it is convenient to choose $w$ as the maximum element of $W$ (w.r.t. $\leq_C$). 
%
If $\e^*\leq_C w$ (this can be checked in $O(|V|)$), we return $\langle\fun{true},\varnothing, C\rangle$. 
Otherwise, we let $\e''$ be $\e^*$ and consider the first entry node as $\e'$ (this is a simplifying choice:
every $\e' \leq_C w$ would work); then, we choose $\u'$ and $\u''$ so that one of them is $\u^*$.
 
If the entry path in $\e'$ does not intersect exit paths, we return $\langle\fun{true}, \varnothing, C\rangle$. 
Otherwise, we check in which case (among \ref{case:x1tp}, \ref{case:x2tp}, \ref{case:x12tp}, and ~\ref{case:tpt} described in the proof of Lemma \ref{lemma:eu}) we are. 
By Remark \ref{rem:oV}, this can be done in $O(|V|)$.
In this check, also $\omega$ (cases \ref{case:x1tp}, \ref{case:x2tp}, and \ref{case:tpt}) and $\alpha/\beta$ (case \ref{case:x12tp}) can be identified. 
According to the case in which we fall, we return either  $\langle\fun{true},\varnothing, C\rangle$
or  $\langle\fun{false},\varnothing, C'\rangle$, where $d_t(C')<d_t(C)$, 
without any additional computational cost.

Summing up, the overall complexity is $O(|E|)$.
\end{proof}

\begin{lemma}
\label{lemma:nsaComplexity}
Function \fun{nonSplittableAnalysis} costs $O(|V|)$.
\end{lemma}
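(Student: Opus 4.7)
The plan is to follow the case analysis in the proof of Lemma~\ref{lemma:eueu}, noting that, unlike the splittable case, no region decomposition and no hyper-chord search are required: the whole analysis takes place on the already computed entry and exit paths. Thanks to Remark~\ref{rem:oV}, every intersection test and every ``first/last vertex along $p$ satisfying $P$'' query reduces to a scan of an array of length at most $|V|$, hence runs in $O(|V|)$.

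First I would select the second entry $\e$ and the two exits $\u',\u''$ required by Lemma~\ref{lemma:eueu}, with one of $\u',\u''$ equal to $\u^*$ and with $C$ ordered as $\e^* \leadsto \u' \leadsto \e \leadsto \u'' \leadsto \e^*$, via a single linear scan of $C$. The minimum length entry/exit paths associated with these nodes, together with their common intersection points $s'$ and $t'$, can be extracted from the BFS trees produced by \fun{entryNodes} and \fun{exitNodes} at cost $O(|V|)$. Then I would test whether $s' \leadsto \e$ is disjoint from $\u' \leadsto t$ and $\u'' \leadsto t$ using Remark~\ref{rem:oV}: if so, I return $\langle\fun{true},C\rangle$, corresponding to the initial diagram in the proof of Lemma~\ref{lemma:eueu}.

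Otherwise I locate the first ($\alpha$) and last ($\omega$) intersections along $s' \leadsto \e$ with an exit path and classify which of cases~\ref{case:alphau1}--\ref{case:alphatpt} applies by comparing positions in the exit-path arrays. Cases~\ref{case:alphau1}, \ref{case:alphau2}, \ref{case:omegau2} yield an embedding directly, and case~\ref{case:omegatpt} builds the smaller cycle $C'$ through $\omega$; all of this is $O(|V|)$.

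The main obstacle will be case~\ref{case:alphatpt}, which further branches on whether $\alpha \leadsto \omega$ touches $\fnln{\u'}{t'}$. If not, I find the first $t^* \in t' \leadsto t$ lying on $s' \leadsto \e$ and construct $C'$ through it by a linear scan. If it does, I compute $B$ as the intersection of the (pre-stored) segment $\alpha \leadsto \omega$ with $\fnln{\u'}{t'}$, then search by a further scan for a segment of $s' \leadsto \e$ joining some $\beta \in B$ to some $\gamma \in \fnln{\u''}{t'}$ without touching other points of the exit paths in between; depending on whether this segment avoids $t' \leadsto t$, I either exhibit the $\W$-embedding or identify $\beta^*$ and $t^*$ to build the smaller cycle $C'$. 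Each sub-step is a linear scan over objects of size $O(|V|)$, so the overall cost remains $O(|V|)$.
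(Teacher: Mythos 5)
Your proposal is correct and follows essentially the same route as the paper: choose $\e$, $\e^*$, $\u'$, $\u''$ by a linear scan of $C$, then classify which case of Lemma~\ref{lemma:eueu} applies by listing intersections between the stored shortest entry/exit paths, which by Remark~\ref{rem:oV} costs $O(|V|)$. Your write-up merely spells out the per-case scans (including case~\ref{case:alphatpt}) that the paper's proof summarizes in one sentence.
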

\begin{proof}
Function  \fun{nonSplittableAnalysis} implements the case analysis 
for non-splittable cycles described in the proof of Lemma \ref{lemma:eueu}.
We pay an $O(|V|)$ for choosing $\e,\e^*, \u', \u''$. Then, we need to check in which
of the possible situations put forward by the Lemma we fall: 
this only requires to list all intersections between 
two minimum paths, that, by Remark \ref{rem:oV}, costs $O(|V|)$.
\end{proof}

\begin{theorem}
Checking whether a net $(G,s,t)$ is vulnerable 
can be solved in time polynomial to the size of $G$, in particular its complexity is $O(|V|\cdot|E|^2)$.
\end{theorem}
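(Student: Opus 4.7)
The plan is to decompose the cost of \fun{isVulnerable} into contributions from (a) the body of the inner \textbf{repeat} loop, (b) the number of inner iterations per outer iteration, (c) the number of outer \textbf{while} iterations, and (d) the final TTSP test, and then multiply.

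For (a), one iteration of the \textbf{repeat} loop computes $En$ and $Ex$ (cost $O(|E|)$ by the fact on \fun{entryNodes}/\fun{exitNodes}), runs \fun{isSplittable}, which amounts to an $O(|V|)$ scan of $C$ given $En$ and $Ex$, and then either invokes \fun{splittableAnalysis} at cost $O(|E|)$ (Lemma~\ref{lemma:saComplexity}) or \fun{nonSplittableAnalysis} at cost $O(|V|)$ (Lemma~\ref{lemma:nsaComplexity}), or else extracts one redundant edge in $O(|V|)$ when $|En|=1$ or $|Ex|=1$. Thus each inner iteration costs $O(|E|)$.

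For (b), by Lemmas~\ref{fact:noteu}, \ref{lemma:eu}, and~\ref{lemma:eueu}, every iteration of the \textbf{repeat} loop has one of three outcomes: it produces a non-empty set $D$ of redundant edges; it sets $Vuln=\fun{true}$ having exhibited an $st$-embedding of $\W$; or it replaces the current cycle $C$ by a cycle $C'$ with $d_t(C')<d_t(C)$ while preserving $s$-minimality. The first two outcomes exit the \textbf{repeat} loop, and the third strictly decreases the non-negative integer $d_t(C)$, which is bounded by $|V|-1$. Hence the \textbf{repeat} loop performs at most $O(|V|)$ iterations before exiting, for a total cost per outer iteration of $O(|V|\cdot|E|)$.

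For (c), I would observe that whenever a \textbf{while} iteration terminates with $Vuln=\fun{false}$, the set $D$ returned by the inner loop is non-empty, so at least one edge is permanently removed from $G$ before the next iteration; otherwise the function returns at line~\ref{alg:returnW}. Therefore the \textbf{while} loop executes at most $|E|$ times. The auxiliary calls \fun{makeSTConnected} (forward BFS from $s$ and backward BFS from $t$) and \fun{s-minimalCycle} (BFS from $s$ augmented with cycle detection at each BFS layer) each run in $O(|E|)$ and are dominated by the inner loop cost. For (d), the final series-parallel test on an acyclic graph is linear by the algorithm of~\cite{VTL82}. Multiplying (a)--(c) gives $O(|E|)\cdot O(|V|\cdot|E|)=O(|V|\cdot|E|^2)$, which absorbs (d); correctness is inherited from Theorem~\ref{thm:vuln} and Proposition~\ref{lemma:equivalentGraphs} (removing redundant edges preserves vulnerability). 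The only delicate point, and the main obstacle worth spelling out, is justifying that the progress measure $d_t(C)$ really is monotone across consecutive inner iterations: this requires noting that the cycles produced by \fun{splittableAnalysis} and \fun{nonSplittableAnalysis} remain $s$-minimal (they still contain $\e^*$), so the same analyses can be iterated on the new cycle without restarting the outer search.
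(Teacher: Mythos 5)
Your proposal is correct and follows essentially the same decomposition as the paper's proof: $O(|E|)$ per inner iteration, $O(|V|)$ inner iterations via the strict decrease of $d_t(C)$ over $s$-minimal cycles (which, as you note, remain $s$-minimal because they contain $\e^*$), $O(|E|)$ outer iterations because each non-vulnerable exit of the inner loop deletes at least one redundant edge, and a linear final TTSP test, yielding $O(|V|\cdot|E|^2)$. The only (immaterial) deviation is your claim that \fun{s-minimalCycle} runs in $O(|E|)$, whereas the paper implements it in $O(|V|\cdot|E|)$; either way it is dominated by the inner loop cost, so the stated bound is unaffected.
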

\begin{proof}
Correctness of function \fun{isVulnerable} in Algorithm 1
stems from the proof of Theorem \ref{thm:vuln}. 


As for the inner {\bf repeat} loop, 
at every iteration $d_t(C)$ decreases by at least 1 
and thus, this loop terminates after at most $O(|V|)$ iterations. 
By Fact \ref{lemma:entryExitNodes} and Lemma \ref{lemma:saComplexity} and \ref{lemma:nsaComplexity}, 
the body of this loop costs $O(|E|)$.
Indeed, lines \ref{alg:oneEUStart}--\ref{alg:oneEUEnd} correspond to Lemma \ref{fact:noteu}, whereas
function \fun{isSplittable} 
checks whether all entry nodes come before all exit nodes in $C$; both these tasks can be done via a
scansion of $C$ and thus cost $O(|V|)$.

As for the outer {\bf while} loop, 
at each iteration the number of edges of $G$ strictly decreases by at least 1;
thus, this loop terminates after at most $O(|E|)$ iterations. 
The complexity of the body of this loop is dominated by 
the cost of the inner {\bf repeat} loop, that, by the above considerations, is  $O(|V|\cdot|E|)$. 
Indeed, making an $st$-multidigraph $st$-connected (function \fun{makeSTConnected} invoked in line 
\ref{alg:makeST1} and \ref{alg:makeST2})
consists of a DFS from $s$ and a backwards DFS from $t$;
all nodes not visited in {\em both} DFSs can be deleted together with their incident edges. 
This costs $O(|E|)$.
An $s$-minimal cycle $C$ and the corresponding $\e^*$ 
(function \fun{s-minimalCycle} in line \ref{alg:sMinimal})
can be found in $O(|V|\cdot|E|)$. 
This covers also the complexity of the guard of the {\bf while} loop in line \ref{alg:whileStart}. 
Indeed, to find $C$ and $\e^*$, consider every successor $v$ of $s$ and see whether $v$ belongs to
some cycle (this can be easily done by a DFS in $G$ starting from $v$). 
If so, $v$ is $\e^*$ and the cycle found is $C$.
Otherwise, iterate the reasoning with the successors of the successors of $s$; and so on until a cycle
is found or all vertices have been considered (in this case $G$ is acyclic).

%
Finally, checking if an acyclic graph is two-terminal series-parallel can be executed in linear time (see \cite{Schoenmakers95anew,VTL82}).

To sum up, the overall complexity of our algorithm is $O(|V| \cdot |E|^2)$.
\end{proof}

%

\subsection{On the Directed Subgraph Homeomorphism Problem}

Theorem \ref{thm:vuln} and Algorithm \ref{alg} have consequences on another, long-standing problem:
the directed subgraph homeomorphism \cite{For1980}. The main problem is the following:
fixed a pattern graph $H$, determine whether $H$ is homeomorphic to a subgraph of any
given $G$ with respect to a given mapping from the nodes of $H$ to the nodes of $G$. This problem
is shown to be NP-hard, apart from a very simple family of pattern graphs for which a polynomial time
algorithm exists. Furthermore, in the conclusions the authors also discuss the problem when the mapping
is not given. Our characterisation falls in this latter case. 

In \cite{For1980} it is suggested that
the general problem is still NP-hard. However, in the special case when the pattern
graph has all nodes with indegree at most 1 and outdegree at most 2, or indegree at most 2 and 
outdegree at most 1, this is no
more necessary, since there exist pattern graphs of this kind for which a polynomial time algorithm
exists \cite{HU72}. However, as far as we know, no polynomial time algorithm for all this class of pattern
graphs has been devised so far.
Thus, our work also contributes to this research line: $\W$ is another pattern 
graph for which a polynomial time algorithm 
for the directed subgraph homeomorphism problem without node mapping
exists.

\begin{corollary}
Checking whether a multi-digraph $G$ contains a subgraph homeomorphic to $\W$
can be solved in time polynomial to the size of $G$.
\end{corollary}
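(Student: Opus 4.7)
The plan is to reduce the subgraph-homeomorphism problem for pattern $\W$ to the vulnerability problem and then invoke Algorithm~\ref{alg}. The central claim to establish is the following equivalence: a multi-digraph $G=(V,E)$ contains a subgraph homeomorphic to $\W$ if and only if there exists an ordered pair of distinct vertices $(s,t)\in V\times V$ such that the net $(G,s,t)$ admits an $st$-embedding of $\W$ in the sense of Definition~\ref{def:stEmbedding}.

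For the forward direction of this equivalence, given a node-disjoint subgraph homeomorphism $(\phi,\psi)$ from $\W$ into $G$, I would choose $s=\phi(a)$ and $t=\phi(d)$, where $a$ and $d$ are the source and target of $\W$ in Fig.~\ref{fig:Wheat}(3); the two extra paths $s\leadsto\phi(a)$ and $\phi(d)\leadsto t$ required by Definition~\ref{def:stEmbedding} can then be taken to be empty, which is explicitly permitted, and the node-disjointness conditions are vacuously satisfied. For the backward direction, any $st$-embedding $(\phi,\psi)$ of $\W$ into $(G,s,t)$ already yields, by forgetting the two extra entry/exit paths and keeping only $\phi(a),\phi(b),\phi(c),\phi(d)$ together with the five paths in ${\mathit Image}(\psi)$, a subgraph of $G$ that is homeomorphic to $\W$.

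Once this equivalence is in place, the corollary follows from Theorem~\ref{thm:vuln}, which asserts that $(G,s,t)$ admits an $st$-embedding of $\W$ if and only if $(G,s,t)$ is vulnerable (after the harmless reduction to the maximal $st$-connected subgraph, which is exactly what \fun{makeSTConnected} performs inside Algorithm~\ref{alg}). The resulting procedure enumerates all $O(|V|^{2})$ ordered pairs $(s,t)$ of distinct vertices and calls \fun{isVulnerable}$(G,s,t)$ on each; it returns true as soon as one such call succeeds and false if no pair works. Each invocation costs $O(|V|\cdot|E|^{2})$ by the complexity analysis of Algorithm~\ref{alg}, for a total of $O(|V|^{3}\cdot|E|^{2})$, which is polynomial in the size of $G$.

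There is no real obstacle here, only a verification: one must check that the equivalence above is genuinely two-sided, i.e.\ that allowing the extra paths of Definition~\ref{def:stEmbedding} to be empty does not strengthen the notion beyond plain node-disjoint subgraph homeomorphism. This is immediate from the definition, so the reduction goes through cleanly and yields the claimed polynomial-time algorithm.
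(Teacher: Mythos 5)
Your proposal is correct and takes essentially the same route as the paper: run Algorithm~\ref{alg} on every ordered pair of distinct vertices as source and target, and answer true iff some call succeeds, for a total cost of $O(|V|^3\cdot|E|^2)$. The only difference is that you spell out the equivalence (via Theorem~\ref{thm:vuln} and the fact that the extra entry/exit paths in Definition~\ref{def:stEmbedding} may be empty when $s=\phi(a)$, $t=\phi(d)$) that the paper's one-line proof leaves implicit.
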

\begin{proof}
Just run Algorithm \ref{alg} for every pair of distinct vertices in $G$ 
(ordinately playing the role of the source and target) 
and return true if and only if at least one of these calls returns true.
This algorithm costs $O(|V|^3\cdot |E|^2)$.
\end{proof}
\vspace{-2mm}

\section{Conclusions}
\label{conclusions}

\begin{figure}[t]
\begin{center}
\begin{tikzcd}[column sep=.6cm,row sep=.3cm]
&\\
s \ar[r] & u 
\ar[dd,bend left = 20]
\ar[r] & t\\
& & \\
& v \ar[uu,bend left = 20] & \\
\end{tikzcd}
\qquad\qquad
\begin{tikzcd}[column sep=.6cm,row sep=.5cm]
&\\
s \ar[r] & u \ar[r,bend left = 20]
& v \ar[l,bend left = 20] \ar[r] & t\\
&\\
&\\
\end{tikzcd}
\qquad\qquad
\begin{tikzcd}[column sep=.6cm,row sep=.6cm]
&\\
  s \ar[r] & u\ar[r]\ar[d] & v\ar[r] & t\\
               & x \ar[r,bend left = 20] & y \ar[l,bend left = 20]\ar[u]
\\
\end{tikzcd}
\vspace*{-.3cm}
\caption{Redundant Graphs that are not vulnerable}
\label{fig:incl}
\end{center}
\end{figure}

We have proved a graph-theoretic characterisation of vulnerable multi-digraphs that
generalizes the analogous one for undirected multigraphs \cite{Milch06}.
Our characterisation improves the results in \cite{ChenEtal15,CinesiFull} 
(given only for irreduntant directed multigraphs)
to general directed multigraphs; indeed, there exist redundant graphs
that are not vulnerable, as shown in Fig.\,\ref{fig:incl}.
Since graphs of this kind can easily appear in real traffic networks, we believe
that our characterisation was a necessary step for completing the knowledge
of vulnerability. 

Also the resulting algorithm was a necessary step: as we proved in this paper, 
checking irredundancy and calculating the maximal irredundant
subnet are NP-hard problems. 
Interestingly, a crucial part of our approach is the identification (and the consequent deletion)
of redundant edges. However, we want to remark that our algorithm neither identify {\em all} such edges 
in a graph nor can be used to decide if a {\em specific} edge is redundant.
Indeed, it removes all redundant edges only when it cannot find $\W$.
Thus, Algorithm 1 can solve (in polynomial time) \dered\ for non-vulnerable nets only. 

Of course, more efficient algorithms for checking
vulnerability should be devised
to be used in practice, but this was not our aim here.

\bibliographystyle{abbrv}
\bibliography{threeForms,chan}

\end{document}